\newcommand{\E}{\mathbb{E}}
\newcommand{\Prob}{\mathbb{P}}
\newcommand{\cov}{\mathrm{Cov}}
\newcommand{\Var}{\mathbb{V}}
\newcommand{\LR}{\bold{R}^{\frac{1}{2}}}
\newcommand{\RT}{\bold{T}^{\frac{1}{2}}}
\newcommand{\MS}{\bold{S}^{\frac{1}{2}}}
\newcommand{\FR}{{\bold{R}}}
\newcommand{\FT}{{\bold{T}}}
\newcommand{\BK}{{\bold{K}}}
\newcommand{\FS}{{\bold{S}}}
\newcommand{\BQ}{{\bold{Q}}}
\newcommand{\BH}{{\bold{H}}}
\newcommand{\BI}{{\bold{I}}}
\newcommand{\BX}{{\bold{X}}}
\newcommand{\BZ}{{\bold{Z}}}
\newcommand{\BY}{{\bold{Y}}}
\newcommand{\BG}{{\bold{G}}}
\newcommand{\BT}{{\bold{T}}}
\newcommand{\Bh}{{\bold{h}}}
\newcommand{\Bx}{{\bold{x}}}
\newcommand{\By}{{\bold{y}}}
\newcommand{\Bw}{{\bold{w}}}
\newcommand{\Bz}{{\bold{w}}}
\newcommand{\BA}{{\bold{A}}}
\newcommand{\BB}{{\bold{B}}}
\newcommand{\BC}{{\bold{C}}}
\newcommand{\BW}{{\bold{W}}}
\newcommand{\BF}{{\bold{F}}}
\newcommand{\BO}{{\mathcal{O}}}
\DeclareMathOperator{\Tr}{Tr}
\newcommand{\RNum}[1]{\uppercase\expandafter{\romannumeral #1\relax}}
\newtheorem{remark}{Remark}
\newtheorem{theorem}{Theorem}
\newtheorem{lemma}{Lemma}
\newtheorem{proposition}{Proposition}
\newtheorem{subremark}{Remark}[remark]
\newcommand\numberthis{\addtocounter{equation}{1}\tag{\theequation}}
\begin{document}
%
\title{Fundamental Limits of Optical Fiber MIMO Channels With Finite Blocklength}
%
%

\author{Xin~Zhang,~Dongfang~Xu,~Shenghui~Song,~\IEEEmembership{Senior Member,~IEEE}, and M\'erouane Debbah, \IEEEmembership{Fellow, IEEE}
\thanks{X. Zhang, D. Xu, and S. H. Song are with the Department of Electronic and Computer Engineering, The Hong Kong University of Science and Technology, Hong Kong (E-mail: xzhangfe@connect.ust.hk; \{eedxu,  eeshsong\}@ust.hk).

M. Debbah is with Khalifa University of Science and Technology, P O Box
127788, Abu Dhabi, UAE. (E-mail: merouane.debbah@ku.ac.ae)
}}
\maketitle

\begin{abstract} The multiple-input and multiple-output (MIMO) technique is regarded as a promising approach to boost the throughput and reliability of optical fiber communications. However, the fundamental limits of optical fiber MIMO systems with finite block-length (FBL) are not available in the literature. This paper studies the fundamental limits of optical fiber multicore/multimode systems in the FBL regime when the coding rate is a perturbation within $\BO(\frac{1}{\sqrt{ML}})$ of the capacity, where $M$ and $L$ represent the number of transmit channels and blocklength, respectively. Considering the Jacobi MIMO channel, which was proposed to model the nearly lossless propagation and the crosstalks in optical fiber systems, we derive the upper and lower bounds for the optimal error probability. For that purpose, we first set up the central limit theorem for the information density in the asymptotic regime where the number of transmit, receive, available channels and the blocklength go to infinity at the same pace. The result is then utilized to derive the upper and lower bounds for the optimal average error probability with the concerned rate. The derived theoretical results reveal interesting physical insights for Jacobi MIMO channels with FBL. First, the derived bounds for Jacobi channels degenerate to those for Rayleigh channels when the number of available channels approaches infinity. Second, the high signal-to-noise (SNR) approximation indicates that a larger number of available channels results in a larger error probability. Numerical results validate the accuracy of the theoretical results and show that the derived bounds are closer to the performance of practical LDPC codes than outage probability.
\end{abstract}

\begin{IEEEkeywords} 
Optical fiber communication, error probability, Jacobi channel, MIMO, random matrix theory.
\end{IEEEkeywords}

%
\IEEEpeerreviewmaketitle

\section{Introduction}

Due to its capability in establishing long-distance communications with low-level loss, optical fiber communications have played a crucial role in telecommunication systems~\cite{keiser2000optical,singer2008electronic}. To meet the increasing demand for high data rates, many innovative techniques, including wavelength-division multiplexing (WDM) and polarization-division multiplexing (PDM), have been proposed to fully exploit the degree of freedom of optical fiber communications~\cite{winzer2010beyond,winzer2011mimo}. Among them, space division multiplexing (SDM) is considered as a promising approach for the next generation optical fiber systems~\cite{marom2015switching,tarighat2007fundamentals}, due to its ability to create multiple parallel transmission channels (paths or modes) within the same fiber and potentially improve the throughput multiple times.

The multiple channels of optical fiber communications correspond to multiple modes or multiple cores in the fiber, which enable the multiple-input multiple-output (MIMO) technique to boost the data rate and reliability. However, there exists a challenging issue for multiple parallel transmissions, namely, the crosstalk between different modes, which is caused by imperfections, twists, and the bending of the fiber~\cite{fini2010statistics} and inevitably introduces the coupling between channels. Furthermore, according to Winzer~\textit{et al.}'s analysis~\cite{winzer2011mimo}, the scattering matrix is a unitary matrix, due to the near-lossless propagation. Some efforts have been devoted to characterize the signal propagation in optical fiber MIMO channels but failed to characterize the unitary attribute of signal propagation~\cite{shah2005coherent,hsu2006capacity}. In fact, the crosstalk between different modes and the unitary scattering matrix make optical fiber MIMO channels different from wireless MIMO channels and require new efforts to fully exploit their fundamental limits.


To this end, the Jacobi model, which truncates a random unitary matrix, was proposed to characterize the optical fiber MIMO channel with strong crosstalk and weak backscattering, and has attracted many research interests~\cite{dar2012jacobi}. Specifically, Dar~\textit{et al.}~\cite{dar2012jacobi} gave the closed-form evaluation for the ergodic capacity and outage probability of Jacobi MIMO channels. Karadimitrakis~\textit{et al.}~\cite{karadimitrakis2014outage}  derived the closed-form expression for the outage probability in the asymptotic regime where the number of transmit, receive, and available channels go to infinity at the same pace and parameterized the transmission loss with the number of unaddressed channels. Nafkha~\textit{et al.}~\cite{nafkha2017upper} gave the closed-form upper and lower bounds for the ergodic capacity. Nafkha~\textit{et al.}~\cite{nafkha2020closed} derived a new closed-form expression for the ergodic capacity and gave the evaluation for the ergodic sum capacity of Jacobi MIMO channels with the minimum mean squared error receiver. Wei~\textit{et al.}~\cite{wei2018exact} derived the explicit expressions for the exact
moments of mutual information (MI) in the high signal-to-noise ratio (SNR) regime and the approximation for the outage probability. Laha~\textit{et al.}~\cite{laha2021optical} derived the ergodic capacity with arbitrary transmit covariance over Jacobi MIMO channels.  


The above works investigated the capacity and outage probability of Jacobi MIMO channels in the infinite blocklength (IBL) regime. However, many innovative applications, such as autonomous driving, virtual/augmented reality, and industrial automation for real-time internet of things (IoT)~\cite{she2017radio}, pose stringent latency requirements on communication systems. To this end, ultra-reliable and low-latency communications (URLLC) with short-length codes must be considered. Unfortunately, existing IBL analyses fail to characterize the impact of the blocklength, and the finite blocklength (FBL) analysis for optical fiber MIMO systems requires in-depth investigation.

In the FBL regime~\cite{polyanskiy2010channel}, the conventional Shannon's coding rate was refined to show that the maximal channel coding rate can be represented by
\begin{equation}
\log M(L,\varepsilon)=LC-\sqrt{LV}Q^{-1}(\varepsilon)+\BO(\log(L)),
\end{equation}
where $C$ denotes the channel capacity, $V$ represents the channel dispersion, and $Q^{-1}(\cdot)$ is the inverse $Q$-function. Here $M(L,\varepsilon)$ represents the cardinality of a codebook with blocklength $L$, which can be decoded with error probability less or equal to $\varepsilon$. The characterization of the trade-off between error probability, blocklength, and coding rate is challenging. For the FBL analysis of single-input single-output systems, Polyanskiy~\textit{et al.}~\cite{polyanskiy2010channel} and Hayashi~\cite{hayashi2009information} derived the second-order coding rate for the additive white Gaussian noise (AWGN) channel. Polyanskiy~\textit{et al.}~\cite{polyanskiy2011scalar} evaluated the dispersion for coherent scalar fading channels and Zhou~\textit{et al.}~\cite{zhou2019lossy} investigated the FBL performance in terms of second-order asymptotics over parallel AWGN channels with quasi-static fading. Yang~\textit{et al.}~\cite{yang2014quasi} investigated the maximal achievable rate for quasi-static multiple-antenna systems with a given blocklength and error probability, and Collins~\textit{et al.}~\cite{collins2018coherent} derived the second-order coding rate for MIMO block fading channels. Hoydis~\textit{et al.}~\cite{hoydis2015second} and Zhang~\textit{et al.}~\cite{zhang2022second} studied the optimal average error probability for quasi-static Rayleigh and Rayleigh-product MIMO channels, respectively, when the rate is close to capacity, and the results showed that outage probability is optimistic in characterizing the error probability. To the best of the authors' knowledge, there is only one work considering the impact of codelength on Jacobi MIMO systems~\cite{opkaradimitrakis2017gallager}, in which the error exponent was derived when the number of receive channels is larger than that of transmit channels. However, the fundamental limits of optical fiber MIMO channels with FBL are not available in the literature, which will be the focus of this work.

\textit{Challenges:} The characterization of the optimal average error probability for Jacobi MIMO channels with FBL resorts to the distribution of information density (ID), which is challenging to obtain due to two reasons. First, the channel model is complex. Jacobi MIMO channels are modeled by the Beta matrix, which is the product of a Wishart matrix and an inverse Wishart matrix. As a result, the inverse structure and product of random matrices in the Jacobi model, together with the fluctuations of two random matrices, must be handled. In particular, setting up the CLT for ID can be achieved by showing that the characteristic function of ID converges to that of a Gaussian distribution, but the complex channel structure makes the computation much involved. Second, ID consists of not only the MI term but also two additional terms~\cite{hoydis2015second}. Thus, the characterization of ID is more complex than that for MI~\cite{karadimitrakis2014outage}, because the asymptotic variance of the additional terms and their asymptotic covariance with MI must be evaluated.

\textit{Contributions}: In this paper, we investigate the optimal average error probability of optical fiber MIMO systems with FBL. The contributions of this work are summarized as follows:


\begin{itemize}

\item[(i)] The closed-form approximation for the ergodic capacity of Jacobi MIMO channels is derived in the asymptotic regime where the number of transmit, receive, and available channels go to infinity at the same pace. With this result, a CLT for ID is set up when the blocklengh approaches infinity with the same pace as the number of channels, which proves the asymptotic Gaussianity of ID with closed-form mean and variance. Besides, the approximation error of the cumulative distribution function (CDF) is shown to be $\BO(L^{-\frac{1}{4}})$, where $L$ denotes the blocklength. The result can degenerate to that for Rayleigh channels in~\cite[Theorem 2]{hoydis2015second} when the number of available channels has a higher order than the blocklength, the number of transmit and receive channels.


\item[(ii)] Based on the CLT, closed-form expressions for the upper and lower bounds of the optimal average error probability are derived and the bounds can degenerate to existing results. Specifically, when the blocklength approaches infinity with a higher order than the number of channels, both the upper and lower bounds approach the outage probability~\cite{karadimitrakis2014outage}. Furthermore, the bounds for Jacobi channels converge to those for Rayleigh channels~\cite{hoydis2015second} when the number of available channels increases to infinity with a higher order than the blocklength, the number of transmit and receive channels. When the rate is close to the capacity, the dispersion in the upper bound agrees with the error exponent for Rayleigh channels~\cite{karadimitrakis2017gallager} if the number of available channels has a higher order. To evaluate the impact of the number of available channels, high SNR approximations for the bounds are derived to show that a larger number of available channels will result in a larger error probability.


\item[(iii)] Simulation results validate the accuracy of the derived bound. It is shown that the gap between the upper and lower bounds for the optimal average error probability is small in the practical SNR regime and the derived bounds are closer to the error probability of practical LDPC codes than outage probability. In fact, the gap between the bounds and outage probability is not ignorable for small blocklength. This indicates that the derived bounds provide a better performance analysis when the rate is close to the capacity. 


\end{itemize}




\textit{Paper Outline:} The rest of this paper is organized as follows. Section~\ref{sec_mod} introduces the system model and problem formulation. Section~\ref{sec_main} gives the CLT for ID and the upper and lower bounds for the optimal average error probability over Jacobi MIMO channel. Section~\ref{sec_simu} validates the theoretical results by the numerical simulations and Section~\ref{sec_con} concludes the paper. The notations in this paper are defined as follows.

\textit{Notations:}  The vector and matrix are denoted by the bold, lower case letters and bold, upper case letters, respectively. The $(i,j)$-th entry of $\bold{A}$ is denoted by $[A]_{i,j}$ or $A_{i,j}$. The conjugate transpose, trace, and spectral norm of $\bold{A}$ are represented by $\bold{A}^{H}$, $\Tr(\BA)$, and $\|\BA \|$, respectively. The $N$-by-$N$ identity matrix is denoted by $\BI_{N}$. The space of $N$-dimensional complex vectors and $M$-by-$N$ complex matrices are represented by $\mathbb{C}^{N}$ and $\mathbb{C}^{M\times N}$, respectively. The expectation of $x$ is denoted by $\E [x]$ and the centered $x$ is represented by $\underline{x}=x-\E [x]$. The covariance of $x$ and $y$ is denoted by $\cov(x,y)=\E\underline{x}\underline{y} $ and the CDF of the standard Gaussian distribution is given by $\Phi(\cdot)$. The conjugate of $x$ is denoted by $(x)^{*}$. The limit that $a$ approaches $b$ from the right is represented by $a  \downarrow b$ and the support operator is represented by $\mathrm{supp}(\cdot)$. The probability operator is denoted by $\Prob(\cdot)$ and the probability measure whose support is a subset of $\mathcal{S}$ is denoted by $\mathrm{P}(\mathcal{S})$. The almost sure convergence and convergence in probability are represented by $\xrightarrow[]{a.s.}$ and $\xrightarrow[]{\mathcal{D}}$, respectively. The big-O and little-o notations are represented by $\BO(\cdot)$ and $o(\cdot)$, respectively. 





\section{System Model and Problem Formulation}
\label{sec_mod}
\subsection{System Model}
As shown in Fig.~\ref{optical_model}, we consider a single-segment optical fiber system with total $n$ available channels~\cite{karadimitrakis2014outage}, where there are $M \le n$ excited transmit channels and $N \le n$ excited receive channels, and the propagation in the fiber is near-lossless. This corresponds to the communication system utilizing multicore/multimode fibers~\cite{ryf2018high}. A strong crosstalk (shown as arrows between modes in Fig.~\ref{optical_model}) between channels (modes) is considered and the backscattering is neglected. The propagation in the concerned system can be characterized by the scattering matrix $\BK\in \mathbb{C}^{2n\times 2n}$ with~\cite{dar2012jacobi,karadimitrakis2014outage,winzer2011mimo}
\begin{equation}
\BK=
\begin{bmatrix}
\bold{R}_{\mathrm{l}} & \bold{T}_{\mathrm{l}}
\\
\bold{T}_{\mathrm{r}} & \bold{R}_{\mathrm{r}}
\end{bmatrix},
\end{equation}
which connects the $n$ left modes with the $n$ right modes. Here, $\FR_{\mathrm{l}}$ and $\FR_{\mathrm{r}}$ denote of the left-to-left and right-to-right reflection coefficients, respectively. Specifically, $[\FR_{\mathrm{l}}]_i$ is the output of the left $n$ modes induced by inserting a unit-amplitude signal into the $i$-th mode at the left hand side. The definition of $\FR_{\mathrm{r}}$ is similar. $\FT_{\mathrm{l}}$ and $\FT_{\mathrm{r}}$ represent the left-to-right and right-to-left transmission coefficients, respectively, where $[\FT_{\mathrm{r}}]_i$ is the output at the right hand side induced by inserting a unit-amplitude signal into the $i$-th mode at the left hand side. 

Given the lossless assumption, the input power and output power induced by the input $\bold{v}_{\mathrm{input}}$ are equal, such that
\begin{equation}
\bold{v}_{\mathrm{input}}^{H}\bold{v}_{\mathrm{input}}=\bold{v}_{\mathrm{output}}^{H}\bold{v}_{\mathrm{output}}=\bold{v}_{\mathrm{input}}^{H}\BK^{H}\BK\bold{v}_{\mathrm{input}},
\end{equation}
where $\bold{v}_{\mathrm{output}}=\BK\bold{v}_{\mathrm{input}}$ denotes the output. Thus, the scattering matrix $\BK$ is unitary. Given the negligibility of the backscattering effect in optical fiber and the symmetric attribute of both sides, there holds true that $\FR_{\mathrm{l}}=\FR_{\mathrm{r}}\approx \bold{0}_n$, $\FR_{\mathrm{l}}=\FR_{\mathrm{l}}^{T}$, $\FR_{\mathrm{r}}=\FR_{\mathrm{r}}^{T}$, and $\FT_{\mathrm{l}}=\FT_{\mathrm{r}}^{T}$~\cite{karadimitrakis2014outage}. Thus, the eigenvalues of the four matrices $\FT_{\mathrm{l}}\FT_{\mathrm{l}}^{H}$, $\FT_{\mathrm{r}}\FT_{\mathrm{r}}^{H}$, $\BI_{n}-\FR_{\mathrm{l}}\FR_{\mathrm{l}}^{H}$, and $\BI_{n}-\FR_{\mathrm{r}}\FR_{\mathrm{r}}^{H}$ are all from the same set consisting of $n$ elements, $0\le \lambda_i \le 1$, $i=1,2,...,n$. The transmission matrix $\FT_l$ and $\FT_r$ are modeled as $n\times n$ Haar-distributed unitary matrices~\cite{dar2012jacobi}.

Given the excited channels at the transmit and receive side are of size $M$ and $N$, respectively, the effective channel matrix, denoted by $\BH\in \mathbb{C}^{N\times M}$, is a truncated version of $\FT_{\mathrm{l}}\in\mathbb{C}^{n\times n} $. In this case, the received signal at the $t$-th slot, $\bold{r}_t\in \mathbb{C}^{N}$, can be represented by
\begin{equation}
\label{sig_mod}
\bold{r}_{t}={\BH}\bold{s}_{t}+\sigma\bold{w}_{t}, ~~t=1,2,...,L,
\end{equation}
where $\bold{s}_{t}\in\mathbb{C}^{M}$ represents the transmit signal (channel input) and $ \bold{w}_{t} \in\mathbb{C}^{N}$ denotes the AWGN, whose entries follow $\mathcal{CN}(0,1)$. Here $\sigma^2$ and $L$ represent the noise power and the blocklength, respectively. In the following, we introduce the Jacobi model for the effective channel matrix ${\BH}\in\mathbb{C}^{N\times M}$.

\subsection{Jacobi Model}
We first define the Jacobi ensemble, which will be utilized to characterize the singular value distribution for the channel matrix $\BH$. The Jacobi ensemble can be represented by 
\begin{equation}
\bold{J}(p,q,r)=\BX\BX^{H}\left(\BX\BX^{H}+\BY\BY^{H}  \right)^{-1},
\end{equation}
where $p\le q $, $p \le r$, and $\BX \in \mathbb{C}^{p \times q}$, $\BY \in \mathbb{C}^{p \times r}$ are two independent and identically distributed (i.i.d.) complex Gaussian random matrices with variance $\frac{1}{p}$. The channel matrix ${\BH}\in\mathbb{C}^{N\times M}$ can be characterized by the Jacobi random matrix~\cite{dar2012jacobi}. Specifically, when $N\le M$ and $N+M\le n$, $\BH\BH^{H}$ has the same eigenvalue distribution as $\bold{J}(N,M,n-M)$. When $ M<N$ and $N+M\le n$, $\BH^{H}\BH$ has the same eigenvalue distribution as $\bold{J}(M,N,n-N)$. When $N+M>n$, $M+N-n$ eigenvalues of $\BH\BH^{H}$ are $1$ and the distribution of the rest $n-\max\{M,N\}$ eigenvalues are same as that of $\bold{J}(n-\min\{M,N\},\min\{M,N\},n-\max\{M,N\})$. Thus, without loss of generality, we only investigate the case $M+N \le n$. Furthermore, we consider the quasi-statistic case where $\BH$ does not change in $L$ channel uses, and assume that the perfect channel state information (CSI) is available at the receiver. For ease of illustration, we introduce the following notations:
$\FS^{(L)}=(\bold{s}_1,\bold{s}_2,...,\bold{s}_L)$, $\FR^{(L)}=(\bold{r}_1,\bold{r}_2,...,\bold{r}_L)$, and $\BW^{(L)}=(\bold{w}_1,\bold{w}_2,...,\bold{w}_L)$. With above settings, we will investigate the optimal average error probability over Jacobi MIMO channels with blocklength $L$. In the following, we first define the performance metrics.


\begin{figure}[t!]
\centering\includegraphics[width=0.45\textwidth]{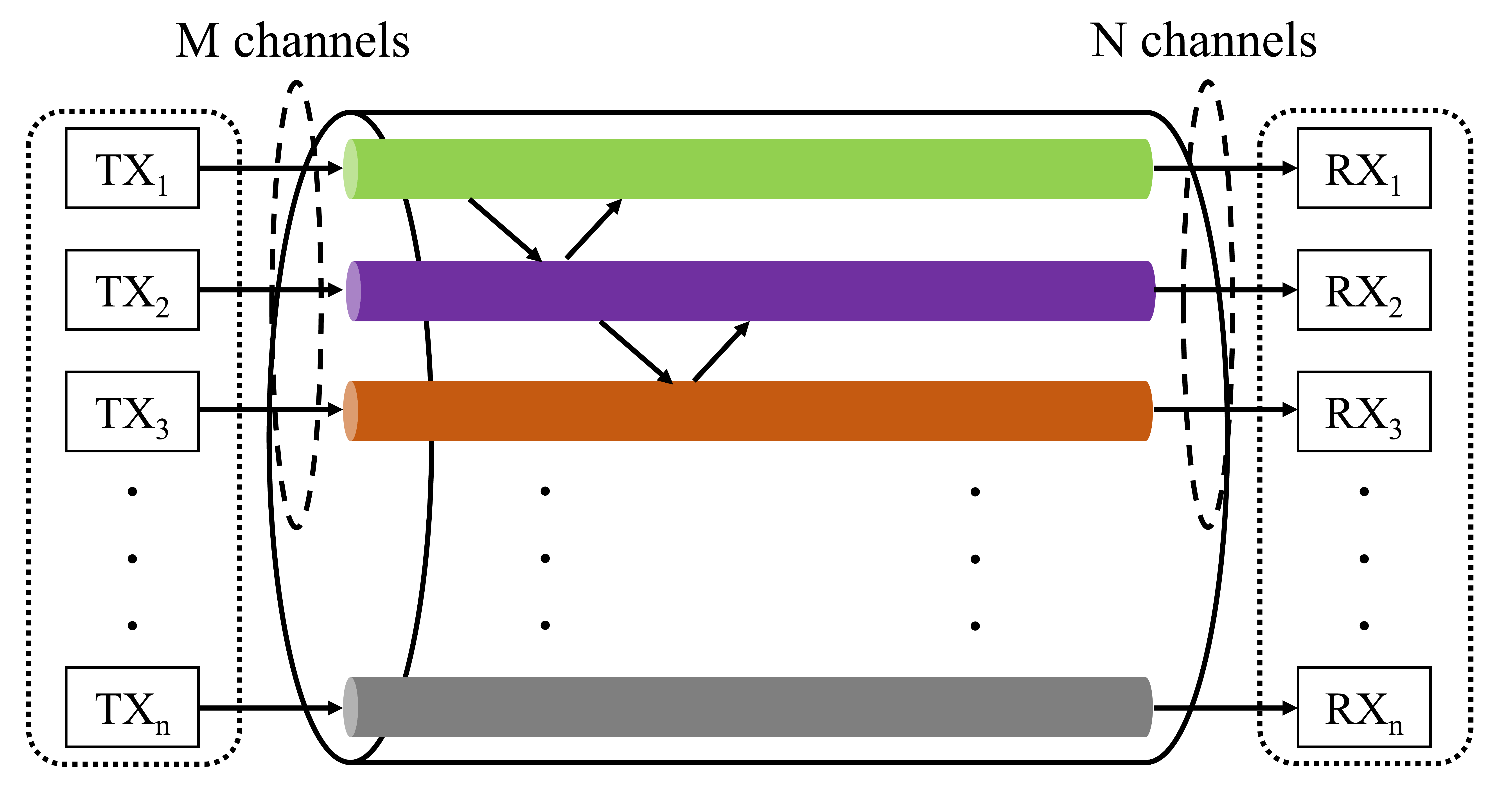}
\caption{Optical SDM MIMO systems with $n$ channels.}
\label{optical_model}
\vspace{-0.3cm}
\end{figure}

\subsection{Optimal Average Error Probability}
A code for the above system can be represented by the following encoding and decoding mapping.

\textit{Encoding mapping} generates the coded message by the mapping from the message $m \in \mathcal{M}$ to the code $\FS_{m}^{(L)} \in \mathbb{C}^{M \times L}$, and can be represented by 
\begin{equation}
\varrho :\mathcal{M} \rightarrow \mathbb{C}^{M\times L}.
\end{equation}
Thus, the transmitted symbol is given by $\FS_{m}^{(L)}=\varrho(m)$, where $m$ is uniformly distributed in $\mathcal{M}=\{1,2,..,G \}$ and $\mathcal{C}_{L}$ denotes the codebook, i.e., $\mathcal{C}_{L}=\{\varrho(1),\varrho(2),...,\varrho(G)\}$. Given the limited transmit power, we consider the maximal energy constraint which requires $\mathrm{supp}(\mathcal{C}_L)\subseteq \mathcal{S}^{L}$, where 
\begin{equation}
\label{max_cons}
\begin{aligned}
\mathcal{S}^{L}=\Bigg\{\FS^{(L)} \in\mathbb{C}^{M\times L }| \frac{\Tr(\FS^{(L)}(\FS^{(L)})^{H})}{ML} \le 1  \Bigg\}.
\end{aligned}
\end{equation}

\textit{Decoding mapping} recovers the message from the channel output $\FR^{(L)}=\BH \varrho(m)+\sigma \BW^{(L)}$, and can be represented by
\begin{equation}
\varpi :\mathbb{C}^{N\times L} \rightarrow \mathcal{M} \cup \{\mathrm{e}\}.
\end{equation}
The mapping $\varpi$ makes the decision $\hat{m}=\varpi(\FR^{(L)})$. If $\hat{m}=m$, the message is correctly decoded. Otherwise, an error $\mathrm{e}$ happens.

Given the message $m$ is uniformly distributed, the \textit{average error probability} for a codebook $\mathcal{C}_L$ of the message set $\mathcal{M}$ with blocklength $L$ is given by
\begin{equation}
\label{pe_def}
\mathrm{P}_{\mathrm{e}}^{(L)}(\mathcal{C}_L)=\frac{1}{G}\sum_{i=1}^{G}\Prob( \hat{m}\neq m | m=i),
\end{equation}
where the evaluation involves the randomness of $\BH$, $\BW^{(n)}$, and $\FS^{L} \in \mathcal{S}^{L}$. The optimal average error probability is given by
\begin{equation}
\label{pro_e_ori}
\begin{aligned} 
\mathrm{P}_{\mathrm{e}}^{(L)}(R)=\inf_{\mathrm{supp}(\mathcal{C}_L)\subseteq \mathcal{S}^{L}}
\mathrm{ P}_{\mathrm{e}}^{(L)}(\mathcal{C}_L) ,
\end{aligned}
\end{equation}
where $R$ denotes the per-antenna rate of each transmitted symbol and $\frac{1}{ML}\log(|\mathcal{C}_L|) \ge R $. Our goal is to obtain the bounds for the optimal average error probability with the maximal energy constraint in~(\ref{max_cons}). The optimal average error probability can be characterized by the distribution of ID, which is introduced in the following.

\subsection{Information Density (ID) and Error Bounds}
As shown in~\cite{polyanskiy2010channel,collins2018coherent,hoydis2015second}, the optimal average error probability can be bounded by the CDF of ID. The ID of the considered MIMO systems is given by
\begin{align*}
\label{mid_exp}
& I_{N,M,n,L}^{\BW,\BH} (\sigma^2)\overset{\bigtriangleup}{=}\frac{1}{M}\log\det(\bold{I}_{N}+\frac{1}{\sigma^2}\BH\BH^{H}) 
+\frac{1}{ML} \times
\\
&
\Tr((\BH\BH^{H}+\sigma^2\bold{I}_{N})^{-1}
(\BH\FS^{(L)}\!+\!\sigma \BW^{(L)})(\BH\FS^{(L)}\!+\!\sigma \BW^{(L)})^{H} )
\\
&
-\frac{1}{ML}\Tr(\BW^{(L)}(\BW^{(L)})^{H}).\numberthis
\end{align*}
It can be observed from~(\ref{mid_exp}) that the first term of ID is the per-antenna MI (per-antenna capacity), whose distribution can be utilized to determine the outage probability, which serves as the lower bound for the optimal average error probability in the IBL regime. Besides the MI term, ID has two extra terms including a trace of the resolvent for $\BH\BH^{H}$ and a noise related term. As a result, the fluctuation of ID is more complex than that of MI due to the covariance between three terms. To proceed, we show that the optimal error probability can be bounded by the ID distribution. Given the ratios $y_1=\frac{N}{n}$, $y_2=\frac{M}{n}$, and $\beta=\frac{L}{M}$, $M  \xrightarrow[]{(y_1,y_2,\beta)}\infty$ represents the asymptotic regime where $N$, $M$, $n$, and $L$ grow to infinity with the fixed ratios $y_1$, $y_2$, and $\beta$.


In the following, we will consider the rate within $\BO(\frac{1}{\sqrt{ML}})$ of $C(\sigma^2)$, i.e.~\cite{polyanskiy2010channel,yang2013quasi,zhou2018dispersion},
\begin{align}
\label{se_code_rate}
\liminf_{ L  \xrightarrow[]{(y_1,y_2, \beta)}\infty}\frac{1}{\sqrt{M L}} \{\log(|\mathcal{C}_L |)- ML \E [C(\sigma^2)]\}  \ge r,
\end{align}
where $r$ represents the second-order coding rate. Given $r$, the optimal average error probability is given by~\cite{hayashi2009information,hoydis2015second}
\begin{align}
   \label{def_oaep}
& \Prob_{\mathrm{e}}(r| y_1,y_2, \beta )
{=}\liminf\limits_{\mathrm{supp}(\mathcal{C}_L)\subseteq \mathcal{S}^{(L)} }\limsup_{L  \xrightarrow[]{(y_1,y_2, \beta)}\infty} \mathrm{P}_{\mathrm{e}}^{(L)}(\mathcal{C}_L).
\end{align}
According to~\cite[Eq. (77) and Eq. (89)]{hoydis2015second}, the optimal average error probability can be bounded by
\begin{equation}
B(r)\le \Prob_{\mathrm{e}}(r|y_1,y_2, \beta  ) \le U(r),
\end{equation}
where the upper and lower bounds can be respectively given by 
\begin{subequations}
\begin{align}
U(r)&=\lim_{\zeta  \downarrow 0} \limsup\limits_{ L \xrightarrow[]{(y_1,y_2, \beta)} \infty}\nonumber
\\
&
 \Prob\Bigl[\sqrt{ML}(I^{\BW,\BH}_{N,M,n,L}(\sigma^2)-\E [C(\sigma^2)]) \!\le \! r+\zeta  \Bigr]\label{upp_bound},
\\
B(r)&=\inf\limits_{\Prob({\FS^{(L+1)})}\in \mathrm{P}(\mathcal{S}_{=}^{L+1})}
\lim\limits_{\zeta  \downarrow 0} 
 \limsup\limits_{L \xrightarrow[]{(y_1,y_2,\beta)} \infty} \nonumber
 \\
 &
 \Prob\!\Bigl[\sqrt{ML}(I^{\BW,\BH}_{N,M,n,L+1}(\sigma^2)\!-\!\E[C(\sigma^2)])\!\le\! r-\zeta  \Bigr]\label{low_bound}.
\end{align}
\end{subequations}
The upper bound in~(\ref{upp_bound}) is achieved by the spherical Gaussian codebook $\FS_{\mathrm{G}}^{(L)}\in \mathbb{C}^{M\times L}=\widetilde{\bold{G}}^{(L)}\left(\frac{1}{ML}\Tr (\widetilde{\bold{G}}^{(L)}(\widetilde{\bold{G}}^{(L)})^{H})\right)^{-\frac{1}{2}}$, where $\widetilde{\bold{G}}^{(L)}\in \mathbb{C}^{M\times L}$ is an i.i.d. Gaussian matrix~\cite{hoydis2015second}. The lower bound in~(\ref{low_bound}) is obtained by taking $\inf$ operation over the set of probability measures $\{\Prob({\FS^{(L+1)})}\in \mathrm{P}(\mathcal{S}_{=}^{L+1})\}$, where $\mathcal{S}_{=}^{L}$ denotes the equal power constraint given by
\begin{equation}
\label{sph_cons}
\begin{aligned}
\mathcal{S}^{L}_{=}=\Bigg\{\FS^{(L)} \in\mathbb{C}^{M\times L }| \frac{\Tr(\FS^{(L)}(\FS^{(L)})^{H})}{ML} = 1  \Bigg\}.
\end{aligned}
\end{equation}
It is worth mentioning that the transmitted symbols satisfying the equal energy constraint in~(\ref{sph_cons}) also follow the maximal energy constraint in~(\ref{max_cons}). The adaptation from the maximal energy constraint to the equal energy constraint can be obtained by introducing an auxiliary symbol~\cite[Lemma 39]{polyanskiy2010channel}. Moreover, there holds true that $\FS_{\mathrm{G}}^{(L)} \in \mathcal{S}^{L}_{=}$.

\subsection{Problem Formulation}
The evaluation for the upper and lower bounds resorts to the ID distribution with the equal energy constraint~(\ref{sph_cons}) and the CDF of ID can be represented by
\begin{equation}
D(x)=\Prob\Bigl\{ \sqrt{ML}(I^{\BW,\BH}_{N,M,n,L}(\sigma^2)-\E[{C}(\sigma^2)]) \le x  \Bigr\}.
\end{equation}
Unfortunately, it is very difficult to obtain the exact expression of the optimal average error probability for arbitrary $N$, $M$, $n$, and $L$ due to the complex structure of Jacobi MIMO channels. To obtain the closed-form evaluation for $D(x)$, we will adopt the asymptotic regime $M  \xrightarrow[]{(y_1,y_2,\beta)}\infty$ and investigate the distribution of ID. 


\section{Main Results}
\label{sec_main}
In this section, we first introduce the asymptotic regime and give the closed-form approximation for the ergodic capacity. Then, we set up a CLT for ID in~(\ref{mid_exp}) with closed-form mean and variance, and utilize the CLT to derive the upper and lower bounds for the optimal error probability. The main results in this paper are based on the following assumption.

\textbf{Assumption A.} (Large system limit) $0<\lim\inf\limits_{N \ge 1}  y_1 \le y_1  \le \lim \sup\limits_{N \ge 1} y_1 <\infty$, $0<\lim\inf\limits_{N \ge 1}  y_2 \le y_2  \le \lim \sup\limits_{N \ge 1} y_2 <\infty$, $0<\lim\inf\limits_{N \ge 1}  \beta \le \beta  \le \lim \sup\limits_{N \ge 1} \beta <\infty$.

\textbf{Assumption A} assumes that $M$, $N$, $n$, and $L$ increase to infinity at the same pace, which is posed to tackle the complex performance evaluation for large-scale MIMO systems. Given $c=\frac{y_1}{y_2}=\frac{N}{M}$, it also guarantees that $0<\lim\inf\limits_{N \ge 1}  c \le c \le \lim \sup\limits_{N \ge 1} c <\infty$. Note that the asymptotic regime in~\textbf{Assumption A} is assumed for the asymptotic analysis but not required for practical operations. This technique has been widely used in evaluating the performance of large MIMO systems~\cite{hachem2008new,moustakas2003mimo,moustakas2023reconfigurable ,zhang2022asymptotic} and the strikingly simple expressions for the asymptotic performance have been validated to be accurate even for low dimensional systems. Furthermore, with SDM, $n$ can be chosen very large, e.g., $64$~\cite{winzer2011mimo, dar2012jacobi} and $N,M$ should be large to achieve high capacity in optical fiber communications.


\subsection{Capacity Analysis}
The ergodic capacity of the Jacobi MIMO channels can be characterized by the following theorem.
\begin{theorem}\label{first_the} Given $0<\lim\inf\limits_{N \ge 1}  y_1 \le y_1  \le \lim \sup\limits_{N \ge 1} y_1 <\infty$ and $0<\lim\inf\limits_{N \ge 1}  y_2 \le y_2  \le \lim \sup\limits_{N \ge 1} y_2 <\infty$, the following evaluation for $C(\sigma^2)=\frac{1}{M}\log\det(\BI_N+\frac{1}{\sigma^2}\BH\BH^{H})$ holds true
\begin{equation}
\label{as_con}
C(\sigma^2)  \xrightarrow[M \xrightarrow{(y_1,y_2)} \infty]{a.s.}  \overline{C}(\sigma^2),
\end{equation}
and
\begin{equation}
\E [C(\sigma^2)]  =  \overline{C}(\sigma^2)+\BO(M^{-2}).
\end{equation}
When $N\le M$, $\overline{C}(\sigma^2)$ is given by 
\begin{equation}
\label{C_exp_1}
\begin{aligned}
\overline{C}(\sigma^2)&=
\log(1+(1+\sigma^{2})\delta)+\frac{1-y_2}{y_2}\log(1+\sigma^2\delta)
\\
&
-\frac{y_1}{y_2}\log(\frac{(1-y_1)\sigma^2\delta}{y_1})
+\frac{\log(1-y_1)}{y_2}
,
\end{aligned}
\end{equation}
where
\begin{equation}
\label{nm_del}
\delta
=\frac{y_1-y_2+(2y_1-1)\sigma^2 +\sqrt{(\sigma^2+\lambda_{-})(\sigma^2+\lambda_{+})}}{2(1-y_1)\sigma^2(1+\sigma^2)}.
\end{equation}
When $N > M$, $\overline{C}(\sigma^2)$ is given by 
\begin{equation}
\label{C_exp_2}
\begin{aligned}
\overline{C}(\sigma^2)&=
\frac{y_1}{y_2}\Biggl[\log(1+(1+\sigma^2)\delta)+\frac{1-y_1}{y_1}\log(1+\sigma^2\delta)
\\
&
-\frac{y_2}{y_1}\log(\frac{(1-y_2)\sigma^2\delta}{y_2})+\frac{\log(1-y_2)}{y_1}\Biggr],
\end{aligned}
\end{equation}
where 
\begin{equation}
\label{mn_del}
\delta
=\frac{y_2-y_1+(2y_2-1)\sigma^2 +\sqrt{(\sigma^2+\lambda_{-})(\sigma^2+\lambda_{+})}}{2(1-y_2)\sigma^2(1+\sigma^2)}.
\end{equation}
Here $ \lambda_{+}$ and $ \lambda_{-}$ are given by
 \begin{equation}
 \begin{aligned}
 \lambda_{+}
&=\Bigl(\sqrt{y_1(1-y_2)}+\sqrt{y_2(1-y_1)}   \Bigr)^2,
\\
\lambda_{-}
&=\Bigl(\sqrt{y_1(1-y_2)}-\sqrt{y_2(1-y_1)}   \Bigr)^2. 
\end{aligned}
\end{equation}
\end{theorem}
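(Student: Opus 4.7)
The plan is to exploit the Jacobi representation $\BH\BH^H \stackrel{d}{=} \bold{J}(N,M,n-M)$ (for $N\le M$), which by the defining identity for the Jacobi ensemble yields
\begin{equation*}
M C(\sigma^2) = \log\det\!\Bigl((1+\tfrac{1}{\sigma^2})\BX\BX^H + \BY\BY^H\Bigr) - \log\det\!\bigl(\BX\BX^H + \BY\BY^H\bigr),
\end{equation*}
where $\BX\in\mathbb{C}^{N\times M}$ and $\BY\in\mathbb{C}^{N\times (n-M)}$ are independent i.i.d.\ complex Gaussian matrices with entry variance $1/N$. This reduces the log-det of a Beta matrix to a difference of log-dets of weighted sums of two independent sample covariance matrices, a setting in which Stieltjes-transform-based deterministic equivalents are well understood.

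To establish the almost-sure convergence (\ref{as_con}) and the explicit form of $\overline{C}(\sigma^2)$, I would work with the Stieltjes transform $m_N(z)=\frac{1}{N}\Tr(\BH\BH^H - z\BI_N)^{-1}$ and derive the fixed-point equation it satisfies via Gaussian integration by parts applied successively to $\BX$ and $\BY$; an alternative route is to invoke the known Wachter limiting spectral distribution on $[\lambda_-,\lambda_+]$ and apply Helly--Bray to the bounded continuous test function $\lambda\mapsto \log(1+\lambda/\sigma^2)$ on $[0,1]$. The resulting deterministic equation is quadratic in a single auxiliary variable $\delta$, and its discriminant is precisely $(\sigma^2+\lambda_-)(\sigma^2+\lambda_+)$, explaining the square-roots in (\ref{nm_del}) and (\ref{mn_del}). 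The closed form (\ref{C_exp_1}) is then recovered by integrating the deterministic Stieltjes transform along the real axis, with boundary condition $\overline{C}(\sigma^2)\to 0$ as $\sigma^2\to\infty$; the three log terms appear as antiderivatives of three rational functions in $\sigma^2$ that are inherited from the quadratic for $\delta$.

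For the case $N>M$, I would use the identity $\log\det(\BI_N + \frac{1}{\sigma^2}\BH\BH^H) = \log\det(\BI_M + \frac{1}{\sigma^2}\BH^H\BH)$ together with $\BH^H\BH \stackrel{d}{=} \bold{J}(M,N,n-N)$, which reduces (\ref{C_exp_2}) to a relabeled version of the previous analysis with $y_1 \leftrightarrow y_2$ and an overall prefactor $N/M = y_1/y_2$ arising from the $1/M$ versus $1/N$ normalization. The almost-sure statement in this regime is inherited from the same concentration arguments applied to $\BH^H\BH$.

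For the $\BO(M^{-2})$ bound on $\E[C(\sigma^2)] - \overline{C}(\sigma^2)$, I would invoke the Poincar\'e--Nash variance inequality for Gaussian functionals to show $\var(M C(\sigma^2))=\BO(1)$, and then refine this via a second-order expansion of $\E[m_N(-\sigma^2)]$ around its deterministic equivalent, using the standard fact that for log-det functionals of Gaussian matrices the deterministic bias is one order smaller than the standard deviation of the per-antenna quantity. The main obstacle is controlling the resolvent of $\BX\BX^H + \BY\BY^H$ uniformly, since the inverse appearing in the Jacobi construction can become ill-conditioned near the lower spectral edge $\lambda_-$ when $y_1,y_2$ are close; a spectrum-separation bound or a local-law-type estimate is needed to guarantee that the remainder terms in the integration by parts are indeed $\BO(M^{-2})$ rather than larger. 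This technical uniform control of the resolvent, rather than the derivation of the leading-order formulas, is where the bulk of the effort lies.
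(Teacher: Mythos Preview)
Your proposal is essentially correct and matches the paper's approach: Gaussian integration by parts and Poincar\'e--Nash applied to the resolvent of $a\BX\BX^H+b\BY\BY^H$, the closed form recovered by integrating the deterministic equivalent in $\sigma^2$, and the $N>M$ case handled via the $y_1\leftrightarrow y_2$ symmetry with prefactor $y_1/y_2$.

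One point worth noting: the paper does not work with your difference-of-log-dets decomposition but with the integral representation
\[
\E[C(\sigma^2)]=\int_{\sigma^2}^{\infty}\Bigl[\tfrac{N}{Mz}-\tfrac{1}{M}\E\Tr(z\BI_N+\BH\BH^H)^{-1}\Bigr]\,\mathrm{d}z,
\]
together with the algebraic identity $\Tr(z\BI_N+\BH\BH^H)^{-1}=\Tr\bigl[\BG(z)(\BX\BX^H+\BY\BY^H)\bigr]$ where $\BG(z)=\bigl((1+z)\BX\BX^H+z\BY\BY^H\bigr)^{-1}$. The resolvent under study therefore always carries strictly positive weights $1+z>z>0$, and the crude bound $\|\BG(z)\|\le (1+z)^{-1}\|(\BX\BX^H)^{-1}\|$ (finite since $N\le M$) already gives the uniform control you flagged as the main obstacle. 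The inverse $(\BX\BX^H+\BY\BY^H)^{-1}$ from the Jacobi construction never appears on its own, so no local-law or spectrum-separation estimate is needed; ordinary Marchenko--Pastur edge bounds suffice for the $\BO(M^{-2})$ remainder. For the almost-sure statement the paper simply quotes the known convergence of the empirical spectral distribution of Beta matrices.
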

\begin{proof} The proof of Theorem~\ref{first_the} is given in Appendix~\ref{proof_first_the}.
\end{proof}
Theorem~\ref{first_the} indicates that when the numbers of channels approach infinity at the same pace, the per-antenna capacity $C(\sigma^2)$ tends to be deterministic, which also occurs in large-scale MIMO wireless systems~\cite{verdu1999spectral}. Moreover, Theorem~\ref{first_the} gives not only the closed-form evaluation for the per-antenna ergodic capacity but also the convergence rate $\BO(M^{-2})$ (approximation accuracy). The convergence rate guarantees that $\sqrt{ML} (\E [C(\sigma^2)] - \overline{C}(\sigma^2))=\BO(M^{-1})$, with which we can replace $\E[C(\sigma^2)]$ with $\overline{C}(\sigma^2)$ such that
\begin{equation}
D(x) \xrightarrow[]{M \xrightarrow{(y_1,y_2,\beta)}\infty}  \Prob\Bigl\{ \sqrt{ML}(I^{\BW,\BH}_{N,M,n,L}(\sigma^2)-\overline{C}(\sigma^2)) \le x  \Bigr\}.
\end{equation}
It is worth noticing that the $\BO(M^{-2})$ convergence rate has also been proved for single-hop~\cite{hachem2008new} and two-hop Rayleigh MIMO channels~\cite{zhang2022asymptotic,zhang2022secrecy,zhuang2024fundamental}.

The approximation for Jacobi MIMO channels in Theorem~(\ref{first_the}) is different from that for Rayleigh MIMO channels~\cite[Eq. (9)]{verdu1999spectral}. Specifically, although the key parameter $\delta$ is the root of a quadratic equation for both cases, the ergodic capacity of Jacobi MIMO channels is also related to $n$. Furthermore, we can prove that~(\ref{C_exp_1}) and~(\ref{C_exp_2}) degenerate to~\cite[Eq. (9)]{verdu1999spectral} when $n  \xrightarrow{(c)} \infty$, which is shown later in Section~\ref{degenerate_sec}.

\subsection{CLT for ID} 
The asymptotic distribution of ID over Jacobi MIMO channels is given by the following theorem.
\begin{theorem}
\label{clt_the}
Given~\textbf{Assumption A} and $\BC_L=\BI_M-\frac{\FS^{(L)}(\FS^{(L)})^{H}}{L}$ with $\FS^{(L)}\in \mathcal{S}_{=}$, the asymptotic distribution of $I^{\BW,\BH}_{N,M,n,L}(\sigma^2)$ converges to a Gaussian distribution. Specifically, there holds true that
\begin{equation}
\sqrt{\frac{ML}{\Xi}}(I^{\BW,\BH}_{N,M,n,L}(\sigma^2)-\overline{C}(\sigma^2))\xrightarrow[M \xrightarrow{(y_1,y_2,\beta)} \infty]{\mathcal{D}} \mathcal{N}(0,1),
\end{equation}
where the asymptotic mean $\overline{C}(\sigma^2)$ is given in~(\ref{C_exp_1}) and~(\ref{C_exp_2}). The asymptotic variance $\Xi$ is given by
\begin{equation}
\Xi=\beta V_1+V_2+\frac{\Tr(\BC_L^2)}{M}\beta V_3,
\end{equation}
where
\begin{equation}
\label{var_exp1}
\begin{aligned}
V_1 &=\log\Biggl(\frac{(\sqrt{\sigma^2+ \lambda_{+}}+\sqrt{\sigma^2+ \lambda_{-}})^{2}}{4\sqrt{(\sigma^2+ \lambda_{+})(\sigma^2+ \lambda_{-})}}\Biggr).
\end{aligned}
\end{equation}
When $N\le M$, $V_2$ and $V_3$ can be represented as 
\begin{align*}
V_2 &=\frac{y_1}{y_2} (1+\frac{1-y_1}{y_1}\sigma^4\delta' ),\numberthis 
\\
V_3 &=\frac{ 1}{ (1+(1+\sigma^2)\delta)^4} \frac{y_2\delta}{ y_1\Bigl(\frac{M(1+\sigma^2) }{N(1+(1+\sigma^2)\delta)^2}+ \frac{N_0\sigma^2  }{N(1+\sigma^2\delta)^2}  \Bigr)},
\end{align*}
where $\delta$ is given in~(\ref{nm_del}).
When $N>M$, $V_2$ and $V_3$ can be represented as 
\begin{align*}
V_2&=(1+\frac{1-y_2}{y_2}\sigma^4\delta'),\numberthis  
\\
V_3
&=\frac{y_1 (1-y_1)\sigma^2\delta^3}{y_2^2 (1+(1+\sigma^2)\delta)^2(1+\sigma^2\delta)}
\\
& \times
\Biggl[1-\frac{1}{(1+\sigma^2\delta)(1+\sigma^2+\frac{\sigma^2(1-y_1)}{y_1}(1+\frac{\delta}{1+\sigma^2\delta})^2 )} \Biggr],
\end{align*}
where $\delta$ is given in~(\ref{mn_del}) and $\delta'=\frac{\mathrm{d} \delta }{\mathrm{d} \sigma^2}$. The convergence of the CDF for ID is given by
 \begin{equation}
 \label{prob_con_rate}
\Prob(\sqrt{\frac{{ML}}{\Xi}}(I_{N,M,n,L}^{\BW,\BH}(\sigma^2)-\overline{C}(\sigma^2)) \le x  )=\Phi(x)+\BO(L^{-\frac{1}{4}}).
\end{equation}
\end{theorem}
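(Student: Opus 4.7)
The plan is to establish the Gaussian limit by analyzing the characteristic function of $\sqrt{ML}(I^{\BW,\BH}_{N,M,n,L}(\sigma^2)-\overline{C}(\sigma^2))$ and showing it converges to that of a centered Gaussian with variance $\Xi$, and then to obtain the $\BO(L^{-1/4})$ CDF rate by a Berry--Esseen style argument. A first convenient move is to split~(\ref{mid_exp}) into the mutual-information part $X_1=\tfrac{1}{M}\log\det(\BI_N+\sigma^{-2}\BH\BH^H)$, which depends only on $\BH$, and the remainder $Y=X_2+X_3+X_4+X_5$ that couples $\BH$, $\FS^{(L)}$ and $\BW^{(L)}$. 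Using the resolvent identity $\BI_N-\sigma^2\BR=\BH\BH^H\BR$ with $\BR=(\BH\BH^H+\sigma^2\BI_N)^{-1}$, a direct computation gives $\E[Y\mid\BH,\FS^{(L)}]=-\tfrac{1}{M}\Tr(\BR\BH\BC_L\BH^H)$, a purely $\BH$-measurable ``signal correction'' whose unconditional mean is $0$ since $\BH$ is unitarily invariant and $\Tr(\BC_L)=0$. Setting $T_1=\sqrt{ML}(X_1-\overline{C})$, $T_2=-\sqrt{ML}\,\tfrac{1}{M}\Tr(\BR\BH\BC_L\BH^H)$ and $T_3=\sqrt{ML}(Y-\E[Y\mid\BH,\FS^{(L)}])$, the goal is then to show that $T_1+T_2+T_3$ converges to $\mathcal{N}(0,\Xi)$ with the three summands individually converging to centered Gaussians of variances $\beta V_1$, $\tfrac{\Tr(\BC_L^2)}{M}\beta V_3$, and $V_2$, respectively, and with vanishing cross-covariances.

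Next, I would carry out the analysis in two layers. Conditional on $\BH$, $T_3$ is a polynomial of degree at most $2$ in the i.i.d.\ Gaussian entries of $\BW^{(L)}$, so its conditional characteristic function can be computed in closed form via a Gaussian integral; using $\BH\BH^H\BR=\BI_N-\sigma^2\BR$ and $\FS^{(L)}(\FS^{(L)})^H=L\BI_M-L\BC_L$, the conditional variance reduces to $\tfrac{1}{M^2L}\bigl[\Tr((\BH\BH^H\BR)^2)+2\sigma^2\Tr(\BH^H\BR^2\BH)\bigr]$ plus a $\BC_L$-coupled term whose $\BH$-mean vanishes, and Theorem~\ref{first_the} together with the Jacobi spectral law give that this conditional variance concentrates on the deterministic limit $V_2$. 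For the $\BH$-measurable pair $(T_1,T_2)$, I would apply a joint CLT for the linear spectral statistic $f(x)=\log(1+x/\sigma^2)$ and the bilinear form $\tfrac{1}{M}\Tr(\BR\BH\BC_L\BH^H)$ on the Jacobi ensemble. To adapt existing machinery, I would use the Wishart-quotient representation $\bold{J}(p,q,r)=\BX\BX^H(\BX\BX^H+\BY\BY^H)^{-1}$ together with the Stieltjes transform and the deterministic equivalent $\delta$ from~(\ref{nm_del})/(\ref{mn_del}), derive resolvent perturbation identities, and extract the limiting variances as contour integrals around the Jacobi support $[\lambda_-,\lambda_+]$; this should yield $\beta V_1$ for $T_1$ and $\tfrac{\Tr(\BC_L^2)}{M}\beta V_3$ for $T_2$. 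The cross-covariance $\E[T_1T_2]$ vanishes because, after conditioning on the spectrum of $\BH\BH^H$, the unitary invariance of the eigenvector law of $\BH$ together with $\Tr(\BC_L)=0$ force $\E[T_2\mid \mathrm{spec}(\BH\BH^H)]=0$, while $T_1$ is a function of the spectrum alone.

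The main technical obstacle is the joint CLT for Jacobi-ensemble linear spectral statistics and bilinear forms, which is substantially more delicate than its Wishart analogue used in~\cite{hoydis2015second} because of the quotient structure: through the representation $\bold{J}(p,q,r)$, one must simultaneously control the resolvents of $\BX\BX^H$ and $\BX\BX^H+\BY\BY^H$ and close the coupled system of fixed-point equations satisfied by $\delta$. The algebraic manipulations needed to turn the resulting contour integrals into the closed-form variances in~(\ref{var_exp1}) and the subsequent displays would exploit the quadratic equation for $\delta$ and the explicit support edges $\lambda_\pm$; this is where the bulk of the calculation sits. Finally, for the $\BO(L^{-1/4})$ CDF rate in~(\ref{prob_con_rate}), I would combine (i) a Berry--Esseen bound for the conditional-Gaussian $T_3$, (ii) the $\BO(M^{-2})$ mean-approximation from Theorem~\ref{first_the}, and (iii) an Esseen smoothing-inequality argument that transfers pointwise characteristic-function convergence into uniform CDF convergence; the exponent $-1/4$ is fixed by the optimal balance in this smoothing step rather than by the underlying CLTs.
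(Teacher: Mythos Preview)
Your decomposition $T_1+T_2+T_3$ and the orthogonality arguments are correct, and the variance attributions to $\beta V_1$, $\tfrac{\Tr(\BC_L^2)}{M}\beta V_3$, and $V_2$ match what the paper eventually extracts. The paper, however, organizes the proof differently: it never isolates $T_1,T_2,T_3$. Instead it works with the full characteristic function $\Psi_L^{\BW,\BH}(t)=\E[e^{\jmath t\sqrt{ML}\,I}]$, integrates out $\BW$ first to obtain a derivative identity of the form $\partial_t\Psi=\E[(\jmath\chi_L^{\BH}-t\psi_L^{\BH})\Phi_L^{\BH}]+o(1)$, and then integrates out $\BH$ via the Wishart-quotient representation and the Gaussian tools (integration by parts and Nash--Poincar\'e) to close a differential equation $\partial_t\Psi=(\jmath\sqrt{ML}\,\overline C-t\Xi)\Psi+o(1)$; solving it gives the Gaussian limit and, by Esseen's smoothing inequality with cutoff $T=L^{1/4}$, the $\BO(L^{-1/4})$ rate. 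So the technical core (Gaussian representation of the Jacobi ensemble plus Stein-type identities) is the same in both routes; what differs is packaging.

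Your route is more transparent about the origin of each variance term, and the unitary-invariance observation giving $\E[T_2\mid\mathrm{spec}(\BH\BH^H)]=0$ is a clean conceptual reason for the vanishing $T_1$--$T_2$ covariance that the paper obtains only as a byproduct of the calculation. Two places to tighten: (i) your ``closed-form Gaussian integral'' for $T_3$ is not literally Gaussian---conditionally it is a sum of $L$ i.i.d.\ quadratic forms in the columns of $\BW^{(L)}$, and you need a conditional CLT (with uniform control in $\BH$) rather than an exact Gaussian computation; (ii) the ``joint CLT for Jacobi LSS and bilinear forms'' you invoke for $(T_1,T_2)$ is precisely the hard lemma that the paper proves inside its ODE argument, so your plan does not bypass that work---you will end up running the same integration-by-parts computations on $\BG(a,b)=((1+a)\BX\BX^H+a\BY\BY^H)^{-1}$ that occupy Step~2 of the paper's proof. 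The paper's unified ODE packaging has the practical advantage that all error terms are carried together, which makes the bookkeeping for the Esseen rate shorter; in your modular approach you must track the errors in each of the three CLTs and in the conditioning step separately before combining.
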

\begin{proof} The proof of Theorem~\ref{clt_the} is given in Appendix~\ref{prof_the2}.
\end{proof}
Theorem~\ref{clt_the} indicates that, the ID distribution for any sequence $\BC_{L}$ with $\FS^{(L)}\in \mathcal{S}^{(L)}_{=}$ converges to a normal distribution in the asymptotic regime. Notice that the first term of the asymptotic variance $V_1$ coincides with the asymptotic variance for MI in~\cite[Eq. (56)]{karadimitrakis2014outage}. With this result, we could establish the upper and lower bounds for the optimal average error probability by selecting $\FS^{(L)}$. Different from the CLTs in~\cite{hachem2008new,hoydis2015second,zhang2022outage,zhangasilomar}, we also analyze the error term $\BO(L^{-\frac{1}{4}})$ for the approximation instead of only showing the convergence of the distribution.

\subsection{Comparison with the CLT for Rayleigh Channels}
\label{degenerate_sec}
Now, we compare the CLT for ID over Jacobi channels in~Theorem~\ref{clt_the} with that for Rayleigh channels in~\cite[Theorem 2]{hoydis2015second}. It has been shown that when $n \rightarrow \infty$, Jacobi model degenerates to Rayleigh model, which follows from the intuition that the Wishart ensemble approaches the Jacobi ensemble~\cite{dar2012jacobi}. To compare Jacobi model with Rayleigh model from the perspective of ``randomness'', a power normalization should be performed so that the received SNR of two channels are equal. Specifically, we let $\rho=\frac{n \overline{\rho}}{M}$ ($\rho=\frac{1}{\sigma^2}$ and $\overline{\rho}=\frac{1}{\overline{\sigma}^2}$) such that $\sigma^2=\frac{M}{n}\overline{\sigma}^2$. We will show that for both cases, $N \le M$ and $N > M$, the asymptotic variance $\Xi$ of Jacobi channels will converge to that of Rayleigh channels given in~\cite[Eq. (20)]{hoydis2015second} when $n   \xrightarrow{(c,\beta)} \infty $.
\subsubsection{$N\le M$} In this case, the following convergence for the key parameters hold true
\begin{align*}
\lambda_{+}&\xrightarrow[]{n   \xrightarrow{(c,\beta)} \infty} ( 1+\sqrt{c}  )^2,~~
\lambda_{-} \xrightarrow[]{n   \xrightarrow{(c,\beta)} \infty} ( 1-\sqrt{c} )^2,
\\
\delta & \xrightarrow[]{n   \xrightarrow{(c,\beta)} \infty}\!\! \frac{-(1\!-\!c\!+\!\overline{\sigma}^2)\!+\!\!\sqrt{ ( 1\!-\!c\!+\!\overline{\sigma}^2 )^2\!+\!4c\overline{\sigma}^2   }  }{2\overline{\sigma}^2}\! \!: =\!\delta_0(\overline{\sigma}^2),
\\
\delta' & \xrightarrow[]{n   \xrightarrow{(c,\beta)} \infty} y_2^{-1}\delta'_{0}(\overline{\sigma}^2) ,   \numberthis
\\
V_2 &
\xrightarrow[]{n   \xrightarrow{(c,\beta)} \infty} c+\overline{\sigma}^4\delta'_0(\overline{\sigma}^2)  ,
\\
V_3 
&
\xrightarrow[]{n   \xrightarrow{(c,\beta)} \infty} \!\! \frac{\beta\delta_0(\overline{\sigma}^2)}{(1\!+\!\delta_0(\overline{\sigma}^2))^4(\overline{\sigma}^2 \!+\! \frac{1}{(1+\delta_0^2(\overline{\sigma}^2))^2}) }
\!=\! \frac{-\beta\delta_0'(\overline{\sigma}^2)}{(1\!+\!\delta_0(\overline{\sigma}^2))^4 }.
\end{align*}
This guarantees that $\overline{C}(\sigma^2)$ and $\Xi$ converge to~\cite[Eqs. (12) and (20)]{hoydis2015second}, respectively.
\subsubsection{$N> M$} In this case, $\delta$ converges to $\widetilde{\delta_0}(\overline{\sigma}^2)=\delta_0(\overline{\sigma}^2)+\frac{M-N}{Nz}$. The convergence of the asymptotic mean and variance can be given by
\begin{align*}
\delta & \xrightarrow[]{n   \xrightarrow{(c,\beta)} \infty} \frac{-(c-1+\overline{\sigma}^2)+\sqrt{ ( c-1+\overline{\sigma}^2 )^2+4c\overline{\sigma}^2   }  }{2\overline{\sigma}^2}
\\
&
  :=\widetilde{\delta_0}(\overline{\sigma}^2)=\delta_0(\overline{\sigma}^2)+\frac{M-N}{N\overline{\sigma}^2},
\\
V_3&   \xrightarrow[]{n   \xrightarrow{(c,\beta)} \infty}   \numberthis
\frac{1}{(1+\delta_0(\overline{\sigma}^2))^3}\Biggl[\frac{M\delta_0^2(\overline{\sigma}^2)}{N} 
\\
&
- \frac{M\overline{\sigma}^2\delta_0^3(\overline{\sigma}^2)}{N(1+\delta_0(\overline{\sigma}^2))(\overline{\sigma}^2+\frac{1}{(1+\delta_0(\overline{\sigma}^2))^2})} \Biggr]
\\
&
=\frac{1}{(1+\delta_0(\overline{\sigma}^2))^4}\frac{\delta_0(\overline{\sigma}^2)}{\overline{\sigma}^2+\frac{1}{(1+\delta_0(\overline{\sigma}^2))^2}}.
\end{align*}
Different from the result for Rayleigh channels, the CLT of Jacobi channels also depends on the number of available channels $n$. When $n$ grows larger, the dependence between the entries of $\BH$ becomes weaker and the channel approaches the Rayleigh case, whose channel coefficients are independent of each other. This agrees with the analysis in~\cite{dar2012jacobi}.

\subsection{Upper and Lower Bounds for Error Probability} 
With Theorem~\ref{clt_the}, the closed-form upper and lower bounds for optimal error probability with a rate close to capacity are given by the following theorem.
\begin{theorem} 
\label{the_oaep}
For a given rate $R=\overline{C}(\sigma^2)+\frac{r}{\sqrt{ML}}$, the optimal average error probability $\Prob_{\mathrm{e}}(r|y_1,y_2,\beta)$ of Jacobi MIMO channels is lower and upper bounded by
\begin{equation}\label{lower_exp}
\Prob_{\mathrm{e}}(r|  y_1,y_2,\beta) \ge 
\begin{cases}
\Phi\Bigl(\frac{r}{\sqrt{\Xi_{-}}}\Bigr)+\BO(L^{-\frac{1}{2}}),~~ r\le 0,
\\
\frac{1}{2},~~ r> 0,
\end{cases}
\end{equation}
and
\begin{equation}\label{upper_exp}
\Prob_{\mathrm{e}}(r|y_1,y_2,\beta)  \le \Phi\Bigl(\frac{r}{\sqrt{\Xi_{+}}}\Bigr)+\BO(L^{-\frac{1}{2}}),
\end{equation}
respectively, where
\begin{equation}
\begin{aligned}
\label{var_upp_low}
\Xi_{-}&=\beta V_1+V_2,~~
\\
\Xi_{+}&=\beta V_1+V_2+V_3.
\end{aligned}
\end{equation}
\end{theorem}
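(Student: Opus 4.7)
The plan is to specialize the CLT of Theorem~\ref{clt_the} inside the definitions~(\ref{upp_bound}) and~(\ref{low_bound}) of $U(r)$ and $B(r)$. Since Theorem~\ref{first_the} yields $\sqrt{ML}(\E[C(\sigma^2)]-\overline{C}(\sigma^2))=\BO(M^{-1})=o(1)$, the threshold inside the probability may be replaced by $r\pm\zeta$ without affecting the limit, so the task reduces to selecting an input matrix in the equal-energy set $\mathcal{S}_=^{L}$ that extremizes the asymptotic variance
\begin{equation*}
\Xi = \beta V_1 + V_2 + \frac{\Tr(\BC_L^2)}{M}\beta V_3
\end{equation*}
prescribed by Theorem~\ref{clt_the}, and then invoking the CDF approximation~(\ref{prob_con_rate}).

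For~(\ref{upper_exp}), I would substitute the spherical Gaussian codebook $\FS_G^{(L)}\in\mathcal{S}_=^{L}$ into the expression for $U(r)$. Since $\FS_G^{(L)}(\FS_G^{(L)})^{H}/L$ converges to the Marchenko--Pastur law with parameter $1/\beta$, a direct moment computation yields
\begin{align*}
\frac{1}{M}\Tr\Bigl(\frac{\FS_G^{(L)}(\FS_G^{(L)})^{H}}{L}\Bigr) & \xrightarrow[]{a.s.} 1,
\\
\frac{1}{M}\Tr\Bigl(\Bigl(\frac{\FS_G^{(L)}(\FS_G^{(L)})^{H}}{L}\Bigr)^{2}\Bigr) & \xrightarrow[]{a.s.} 1+\frac{1}{\beta},
\end{align*}
whence $\tfrac{1}{M}\Tr(\BC_L^{2})\xrightarrow[]{a.s.}\tfrac{1}{\beta}$ and the random factor multiplying $V_3$ converges to $1$. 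Feeding this into Theorem~\ref{clt_the} produces the Gaussian CDF with variance $\Xi_+=\beta V_1+V_2+V_3$, and letting $\zeta\downarrow 0$ produces~(\ref{upper_exp}).

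For~(\ref{lower_exp}), observe that every admissible $\FS^{(L+1)}\in \mathcal{S}_=^{L+1}$ obeys $\Tr(\BC_{L+1}^{2})\ge 0$, and direct inspection of the closed-form expressions for $V_3$ shows $V_3>0$ in both cases $N\le M$ and $N>M$. Hence $\Xi\ge \Xi_- = \beta V_1 + V_2$ for every $\Prob(\FS^{(L+1)})\in\mathrm{P}(\mathcal{S}_=^{L+1})$. When $r\le 0$, the map $\Xi\mapsto\Phi(r/\sqrt{\Xi})$ is non-decreasing, so Theorem~\ref{clt_the} bounds the probability inside $B(r)$ below by $\Phi((r-\zeta)/\sqrt{\Xi_-})+\BO(L^{-\frac{1}{4}})$ uniformly over $\FS^{(L+1)}$; taking the infimum, $\limsup_L$, and $\zeta\downarrow 0$ gives the first branch of~(\ref{lower_exp}). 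When $r>0$, $\Phi(r/\sqrt{\Xi})$ is strictly decreasing in $\Xi$, and by concentrating the signal energy in a few columns of $\FS^{(L+1)}$ (a point mass at such a deterministic choice is admissible), $\Tr(\BC_{L+1}^2)$ can be made arbitrarily large so that $\Phi(r/\sqrt{\Xi})\downarrow \Phi(0)=\frac{1}{2}$, giving the second branch.

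The main obstacle is that Theorem~\ref{clt_the} is stated for deterministic sequences $\BC_L$, whereas the spherical Gaussian codebook in the upper bound induces a random $\BC_L$. I would handle this by conditioning on the high-probability event $\mathcal{E}_L=\bigl\{\bigl|\tfrac{1}{M}\Tr(\BC_L^2)-\tfrac{1}{\beta}\bigr|\le L^{-\frac{1}{4}}\bigr\}$, whose complement has probability $o(L^{-\frac{1}{2}})$ by Gaussian concentration for linear statistics of Wishart matrices; on $\mathcal{E}_L$ apply Theorem~\ref{clt_the} conditionally on $\FS_G^{(L)}$, and on $\mathcal{E}_L^c$ bound the conditional probability trivially by $1$. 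Combining these with Lipschitz continuity of $\Phi$ and a first-order Taylor expansion of $\Xi\mapsto 1/\sqrt{\Xi}$ around $\Xi_+$ transfers the pointwise bound of Theorem~\ref{clt_the} to a uniform neighborhood of $\Phi(r/\sqrt{\Xi_+})$. Upgrading the $\BO(L^{-\frac{1}{4}})$ rate of Theorem~\ref{clt_the} to the $\BO(L^{-\frac{1}{2}})$ rate advertised in~(\ref{upper_exp})--(\ref{lower_exp}) is the most delicate step; I expect it will require a Berry--Esseen-type refinement of the characteristic-function argument underlying Theorem~\ref{clt_the}, since a naive propagation of errors through the steps above only gives $\BO(L^{-\frac{1}{4}})$.
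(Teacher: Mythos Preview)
Your proposal is correct and follows essentially the same route as the paper, which in fact omits the proof entirely and simply states that it ``can be proved based on Theorem~\ref{clt_the}, which is similar to~\cite[Theorem 3]{hoydis2015second} and~\cite[Theorem 3]{zhang2022second}.'' Your outline---spherical Gaussian codebook for the upper bound (giving $\tfrac{1}{M}\Tr(\BC_L^2)\to\tfrac{1}{\beta}$ and hence $\Xi_+$), and the infimum over $\Tr(\BC_L^2)\ge 0$ for the lower bound with the monotonicity split at $r=0$---is precisely the machinery of those references, and the paper even flags explicitly (below~(\ref{low_bound})) that $U(r)$ is achieved by $\FS_G^{(L)}$.

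The one point you flag as ``the most delicate step'' is genuine and is \emph{not} resolved by the paper either: Theorem~\ref{clt_the} only delivers an $\BO(L^{-1/4})$ Berry--Esseen rate via the Esseen inequality with $T=L^{1/4}$, yet Theorem~\ref{the_oaep} asserts $\BO(L^{-1/2})$. Since the proof is omitted, the paper provides no mechanism for this upgrade; it is plausibly a typo or an imprecise transcription of the rate from the cited references. Your honest identification of this gap is appropriate---a direct application of Theorem~\ref{clt_the} yields the bounds with $\BO(L^{-1/4})$, and anything sharper would indeed require a refined Esseen-type argument that neither you nor the paper supplies.
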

\begin{proof}
Theorem~\ref{the_oaep} can be proved based on Theorem~\ref{clt_the}, which is similar to~\cite[Theorem 3]{hoydis2015second} and~\cite[Theorem 3]{zhang2022second}, and omitted here. 
\end{proof}

In the following remarks, we compare the upper and lower bounds with existing results.

\setcounter{remark}{1}
\begin{subremark}
\label{rem_rayleigh}
\textbf{Comparison of the upper and lower bounds between Jacobi and Rayleigh MIMO channels:}
As shown in Section~\ref{degenerate_sec}, the variance terms $\beta V_1$, $V_3$, and $V_2$ will degenerate to the first, last term, and the sum of the second and third terms in the dispersion~\cite[Eq. (25)]{hoydis2015second}, respectively, when $\sigma^2=\frac{M}{n}\overline{\sigma}^2$ with $n   \xrightarrow{(c,\beta)} \infty$. This indicates that the bounds for Jacobi channels degenerate to those for Rayleigh channels.
\end{subremark}

\begin{subremark}
\label{rem_outage}
 \textbf{Comparison of the upper and lower bounds with outage probability:} The outage probability with IBL can be obtained by letting $\beta   \xrightarrow[]{(y_1,y_2)}\infty$ ($L$ has a higher order than the number of channels $M$, $N$, and $n$, whose ratios are fixed). Specifically, we have 
\begin{equation}
\begin{aligned}
\label{outage_exp}
&\Prob_{\mathrm{e}}(r|y_1, y_2,\beta)\xlongrightarrow[]{\beta   \xrightarrow[]{(y_1,y_2)}\infty} P_{\mathrm{out}}(R)
\\
&=\Phi\Bigl(\frac{M(R-\overline{C}(\sigma^2))}{\sqrt{V_1}}\Bigr)+\BO(N^{-\frac{1}{2}}), 
\end{aligned}
\end{equation}
which agrees with~\cite[Eq. (59)]{karadimitrakis2014outage} and the variance $V_1$ is equal to~\cite[Eq. (56)]{karadimitrakis2014outage}. The outage probability in~\cite{karadimitrakis2014outage} was derived by large deviation method. Compared with the asymptotic RMT, the large deviation method achieves higher accuracy when $R$ is much smaller than the capacity, but with complex expressions. Furthermore, the large deviation results agree with the asymptotic RMT results in~(\ref{outage_exp}) when the rate is close to the capacity. Different from the outage probability, the impact of FBL is reflected by $\beta$, $V_2$, and $V_3$ in~(\ref{var_upp_low}).
\end{subremark}

\begin{subremark}
\textbf{Comparison of the error exponent between Jacobi and Rayleigh MIMO channels: } The exponent of error probability (Gallagar bound) for Jacobi MIMO channels was investigated in~\cite{opkaradimitrakis2017gallager} in the asymptotic regime. When the rate is close to capacity, the error exponent for the case $M+N=n$, and $N \le M$ ($\lambda_{+}=1$, $\lambda_{-}=(y_2-y_1)^2$) is given by
\begin{equation}
\label{expo_jaco}
E_{G}=V_1+\frac{(\frac{y_1}{y_2}+1)(1-\sqrt{\frac{\sigma^2+\lambda_{-}}{\sigma^2+1}})}{\beta }.
\end{equation}
In this case, the variance $\Xi_{+}$ in~(\ref{var_upp_low}) can be represented by
\begin{equation}
\label{gala_degene}
\frac{\Xi_{+}}{\beta}=V_1+\frac{2\omega-\omega^2}{\beta},
\end{equation}
where 
\begin{align*}
&\omega\!\!=\!\!\frac{N\delta}{M(1+(1+\sigma^2)\delta)}
\!\!=\!\!\frac{N\bigl(y_2-y_1 +\sqrt{\frac{\sigma^2+\lambda_{-}}{\sigma^2+1}}\bigr)}{M\Bigl( \sigma^2\!+\!y_2 \!-\! y_1 \!+\!(1+\sigma^2)\sqrt{\frac{\sigma^2\!+\!\lambda_{-}}{\sigma^2+1}} \Bigr)}
\\
& \!\!\overset{(a)}{=}\!\!\frac{N\Bigl( -2y_1\sigma^2+2y_1\sigma^2\sqrt{\frac{\sigma^2+\lambda_{-}}{\sigma^2+1}}\Bigr)}{-4M\sigma^2y_1^2}\!\!
=\!\!\frac{(1+\frac{y_1}{y_2}) \Bigl(1-\sqrt{\frac{\sigma^2+\lambda_{-}}{\sigma^2+1}}\Bigr)}{2}.\numberthis \label{gala_degene1}
\end{align*}
Step $(a)$ in~(\ref{gala_degene1}) follows by $2(y_2 - y_1)=2(y_2^2 -y_1^2)$. By comparing~(\ref{expo_jaco}) and~(\ref{gala_degene}), we have $\frac{\Xi_{+}}{\beta}=E_G-\frac{\omega^2}{\beta}<E_G$. This indicates that the upper bound in Theorem~\ref{the_oaep}, which is achieved by spherical Gaussian codebook~\cite[Theorem 3]{hoydis2015second}, is tighter than the error exponent when the rate is close to the capacity. Furthermore, Theorem~\ref{the_oaep} is also valid when $N \le M$ while the error exponent in~\cite{opkaradimitrakis2017gallager} is only derived for $N>M$.

Theorem~\ref{the_oaep} is more powerful in characterizing the error probability than the error exponents in~\cite{karadimitrakis2017gallager} when the rate is close to the capacity since the error probability is represented by a Gaussian approximation with closed-form mean and variance. The asymptotic Gaussianity has been strictly proved in Theorem~\ref{clt_the}. The correctness of this bound can be further validated by the fact that it coincides with the error exponent for Rayleigh MIMO channels when $n\rightarrow \infty$ with $\sigma^2=\frac{M\overline{\sigma}^2}{n}$, as shown below. In~\cite{karadimitrakis2017gallager}, it has been shown that for Rayleigh MIMO channels, the error exponent saturates the dispersion in upper bound~\cite[Eq. (25)]{hoydis2015second} when the rate is close to capacity. The dispersion $\Xi_{+}$ in~(\ref{gala_degene}) of Jacobi MIMO channels can degenerate to the error exponent of Rayleigh MIMO channels. Specifically, we have
\begin{align*}
\omega &\xrightarrow[]{n   \xrightarrow{(c,\beta)} \infty} \frac{\delta_0(\overline{\sigma}^2)}{1+\delta_0(\overline{\sigma}^2)}  \numberthis
\\
&=\frac{\left(\sqrt{\overline{\sigma}^2+(1+\sqrt{\frac{y_1}{y_2}})^2}+\sqrt{\overline{\sigma}^2+(1-\sqrt{\frac{y_1}{y_2}})^2}\right)^2}{4},
\end{align*}
which agrees with $g_0$ and the error exponent given in~\cite[Eqs.(31) and (32)]{karadimitrakis2017gallager}. This indicates that the Gallager random
coding exponent with Gaussian input saturates the derived dispersion in the upper bound when $n   \xrightarrow{(c,\beta)} \infty$. 
\end{subremark}








The high SNR approximation for the bounds in Theorem~\ref{the_oaep} can be obtained by taking $\sigma^2 \rightarrow 0$, which is given in the following proposition.
\begin{proposition} 
\label{high_snr_pro}
(High SNR approximation) When $\sigma^2 \rightarrow 0$, $\overline{C}(\sigma^2)$, $\Xi_{-}$, and $\Xi_{+}$ can be approximated by
\begin{equation}
\overline{C}(\sigma^2)=\frac{\min\{y_1,y_2 \}}{y_2} +\BO(1),
\end{equation}
\begin{equation}
\label{v__approx-}
\Xi_{-}=
\begin{cases}
\begin{aligned}
&\beta V^{(\infty,1,2)}_1+\frac{y_1}{y_2}+\BO(\sigma^2),~N< M,
\\
&  \beta V^{(\infty,2,1)}_1  +1+\BO(\sigma^2),~M>N,
\\
&\BO(\sigma^{-1}),~M=N, 
\end{aligned}
\end{cases}
\end{equation}
\begin{equation}
\label{v__approx+}
\begin{aligned}
\Xi_{+}=
\begin{cases}
\begin{aligned}
&\beta V^{(\infty,1,2)}_1+\frac{2y_1}{y_2}-\frac{y_1^2}{y_2^2}+\BO(\sigma^2),~N<M,
\\
 & \beta V^{(\infty,2,1)}_1+\BO(\sigma^2), ~N>M,
\\
&\BO(\sigma^{-1}),~N=M,
\end{aligned}
\end{cases}
\end{aligned}
\end{equation}
respectively, where $V^{(\infty,i,j)}_1=-\log\Bigl( 1- \frac{y_i(1-y_j)}{y_j(1-y_i)} \Bigr)$ with $i,j \in \{1,2 \}$.
\end{proposition}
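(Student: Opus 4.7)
The plan is to perform a direct Taylor expansion of the closed-form expressions from Theorem~\ref{clt_the} as $\sigma^2\downarrow 0$, handling the three cases $N<M$, $N>M$, and $N=M$ separately because the fixed-point quantity $\delta$ behaves qualitatively differently when $y_1=y_2$ (so that $\lambda_{-}=0$). The proof reduces to algebraic bookkeeping once the leading-order behavior of $\delta$ is established.

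The first step is to obtain the high-SNR expansion of $\delta$ from~(\ref{nm_del}) and~(\ref{mn_del}). I would begin by noting the structural identities $\lambda_{+}\lambda_{-}=(y_2-y_1)^2$ and $\lambda_{+}+\lambda_{-}=2(y_1+y_2-2y_1y_2)$, which follow directly from the definitions of $\lambda_{\pm}$. For $N<M$ (i.e.\ $y_1<y_2$), $\sqrt{\lambda_{+}\lambda_{-}}=y_2-y_1>0$, and a binomial expansion gives
\begin{equation*}
\sqrt{(\sigma^2+\lambda_{+})(\sigma^2+\lambda_{-})}=(y_2-y_1)+\frac{y_1+y_2-2y_1y_2}{y_2-y_1}\sigma^2+\BO(\sigma^4).
\end{equation*}
The $\BO(1)$ contributions in the numerator of $\delta$ cancel against $y_1-y_2$, and using the identity $(2y_1-1)(y_2-y_1)+y_1+y_2-2y_1y_2=2y_1(1-y_1)$, this yields $\delta=y_1/(y_2-y_1)+\BO(\sigma^2)$. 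The case $N>M$ is symmetric and gives $\delta\to y_2/(y_1-y_2)$ from~(\ref{mn_del}).

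Next I would substitute this expansion into the formulas for $\overline{C}$, $V_1$, $V_2$, and $V_3$. In $\overline{C}(\sigma^2)$ only the term $-(y_1/y_2)\log((1-y_1)\sigma^2\delta/y_1)$ is singular, producing the leading behavior with the stated $\min\{y_1,y_2\}/y_2$ coefficient plus bounded $\BO(1)$ corrections. For $V_1$ in~(\ref{var_exp1}) I would use $(\sqrt{\lambda_{+}}+\sqrt{\lambda_{-}})^2=\lambda_{+}+\lambda_{-}+2\sqrt{\lambda_{+}\lambda_{-}}=4y_2(1-y_1)$ to identify the limit as $\log(y_2(1-y_1)/(y_2-y_1))$, then verify that $1-y_1(1-y_2)/(y_2(1-y_1))=(y_2-y_1)/(y_2(1-y_1))$ to recognize this as $V^{(\infty,1,2)}_1$. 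For $V_2$, since $\delta$ is smooth at $\sigma^2=0$, $\delta'=\BO(1)$ and hence $\sigma^4\delta'=\BO(\sigma^4)$, yielding $V_2=y_1/y_2+\BO(\sigma^4)$. For $V_3$ I would substitute $\delta\to y_1/(y_2-y_1)$ and discard the $\sigma^2$-weighted summand in the denominator to obtain $V_3\to y_1(y_2-y_1)/y_2^2$; the sum $V_2+V_3$ then matches the stated $2y_1/y_2-y_1^2/y_2^2$. The $N>M$ case proceeds identically on~(\ref{C_exp_2}) with the roles of $y_1$ and $y_2$ exchanged.

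The degenerate case $N=M$ requires a separate expansion because $\lambda_{-}=0$ destroys the $\BO(\sigma^2)$ cancellation in the $\delta$-numerator: one obtains instead $\delta\sim\sqrt{y_1/(1-y_1)}\,\sigma^{-1}$, and $V_1=\log((\sqrt{\sigma^2+\lambda_{+}}+\sigma)^2/(4\sigma\sqrt{\sigma^2+\lambda_{+}}))\sim\log(\sigma^{-1})$, which is $\BO(\sigma^{-1})$ and dominates since $V_2,V_3$ remain bounded in this regime. The main obstacle is algebraic rather than analytic: the simplifications that identify $V_1$'s limit with $V_1^{(\infty,i,j)}$ and produce the compact form of $V_3$'s limit are nontrivial and require careful manipulation of symmetric functions of $\lambda_{\pm}$; special care is also needed near $y_1=y_2$ to avoid spurious $0/0$ indeterminate forms when taking the limit.
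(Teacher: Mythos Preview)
Your approach is correct and is essentially the only natural one: the paper does not supply an explicit proof of Proposition~\ref{high_snr_pro}, treating it as a direct computational consequence of the closed forms in Theorems~\ref{first_the} and~\ref{clt_the}, and your Taylor expansion in $\sigma^2$ is exactly how those high-SNR limits are obtained. Your identification of the three regimes (driven by whether $\lambda_{-}>0$), the expansion $\delta\to y_1/(y_2-y_1)$ for $N<M$, the algebraic identities $\lambda_{+}\lambda_{-}=(y_2-y_1)^2$ and $(\sqrt{\lambda_{+}}+\sqrt{\lambda_{-}})^2=4y_2(1-y_1)$, and the recognition of $V_1^{(\infty,1,2)}$ all check out.

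Two minor remarks. First, for the $N=M$ case you write that $V_1\sim\log(\sigma^{-1})$ ``is $\BO(\sigma^{-1})$''; this is of course true but extremely loose, and it is worth noting that the proposition's $\BO(\sigma^{-1})$ bound is itself a crude upper estimate rather than the sharp logarithmic growth you correctly identify. Second, you defer the $N>M$ case to symmetry, but the formula for $V_3$ in that regime is \emph{not} the $y_1\leftrightarrow y_2$ mirror of the $N\le M$ expression (it carries an explicit $\sigma^2$ prefactor and a bracket that itself vanishes as $\sigma^2\to0$), so the limit must be worked out separately rather than read off by exchanging indices; doing so carefully is needed to match the stated form of $\Xi_{+}$ in that case.
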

\begin{remark}
\label{high_rem}
It is worth noticing that the dominating term of $\overline{C}(\sigma^2)$ has a coefficient that is related to the ratio between $\min \{M,N \}$ and $N$, and is independent of $n$, which indicates that in the high SNR regime, the capacity is limited by the minimum of the number of transmit and receive channels. This coincides with the high SNR approximation for Rayleigh MIMO channels in~\cite[Eqs. (36) and (37)]{zhang2022asymptotic}. In fact, $n$ reflects the dependence between transmit and receive channels. In the high SNR regime, the dependence between the channel coefficients has much less effect on capacity than the number of transmit and receive channels. Besides the $\log$ term, terms in $\Xi_{-}$ and $\Xi_{+}$ are the same as those for Rayleigh MIMO channels. When $n$ grows larger for $N<M$ and $N>M$, the term $V^{(\infty,i,j)}_1=-\log\Bigl( 1- \frac{y_i(1-y_j)}{y_j(1-y_i)} \Bigr)$ in~(\ref{v__approx-}) and~(\ref{v__approx+}), which represents the asymptotic variance for the MI in~(\ref{outage_exp}), increases. This results in a larger error probability. Furthermore, when $N\le M$, we have $\Xi_{-} \xrightarrow[]{n   \xrightarrow{(c,\beta)} \infty}-\beta\log(1-c)+c$ and $\Xi_{+} \xrightarrow[]{n   \xrightarrow{(c,\beta)} \infty}=-\beta\log(1-c)+c(2-c)$, which agree with the high SNR approximations for Rayleigh MIMO channels in~\cite[Eqs. (27) and (28)]{hoydis2015second}.
\end{remark}


\section{Numerical Results}
\label{sec_simu}

In this section, we validate the derived theoretical results by numerical simulations.
\begin{figure}
\vspace{-0.4cm}
\centering
\includegraphics[width=0.4\textwidth]{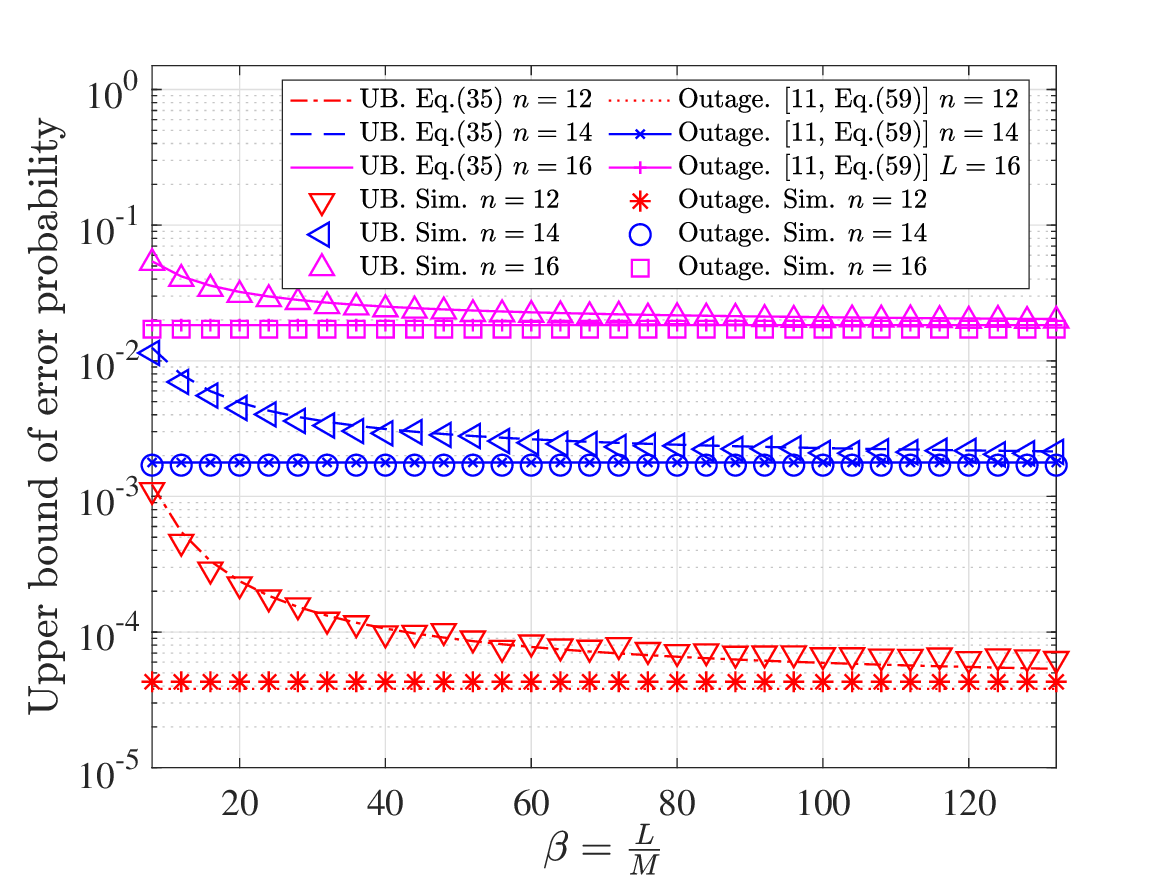}
\captionof{figure}{Approximation accuracy of the derived bounds.}
\label{simu_up}
\vspace{-0.5cm}
\end{figure}
\begin{figure}
\centering
\includegraphics[width=0.4\textwidth]{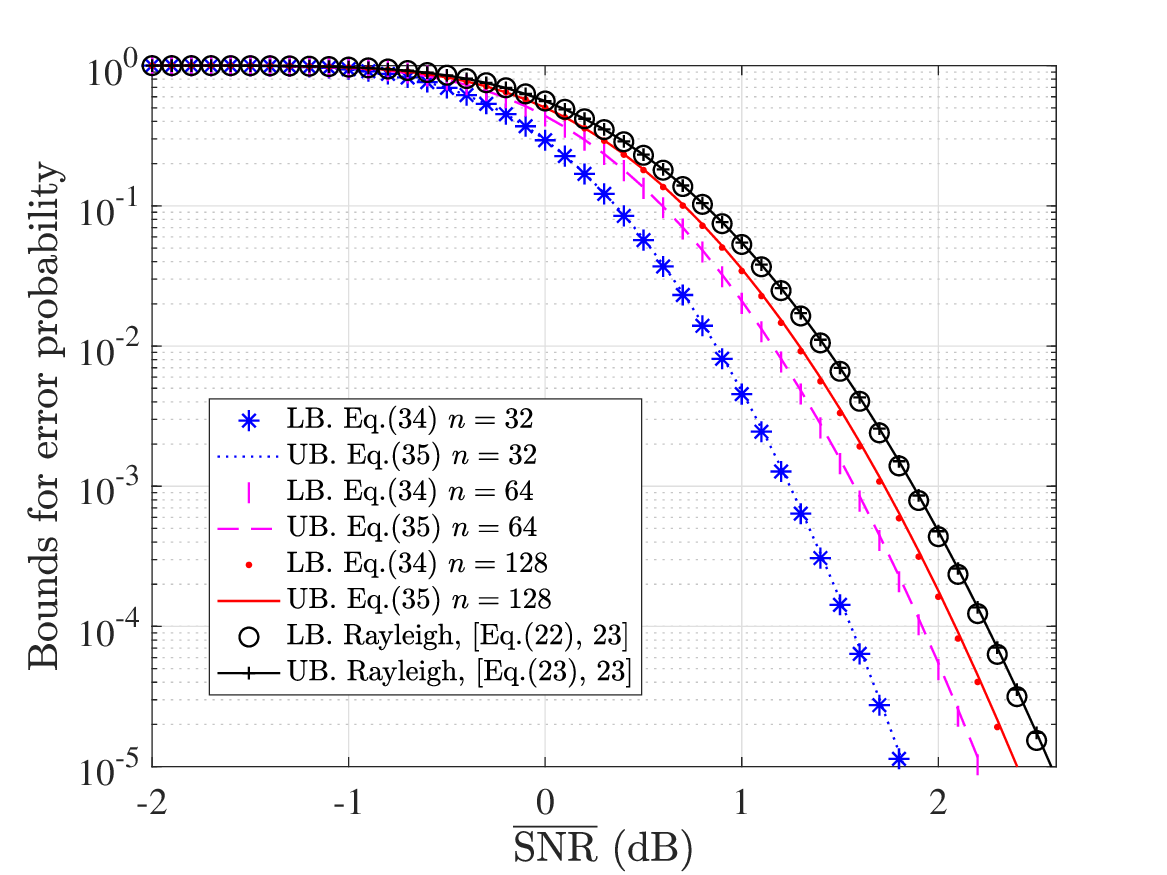}
\captionof{figure}{Bounds with different $n$.}
\label{bounds_n}
\vspace{-0.5cm}
\end{figure}
\begin{figure}
\vspace{-0.5cm}
\centering
\includegraphics[width=0.4\textwidth]{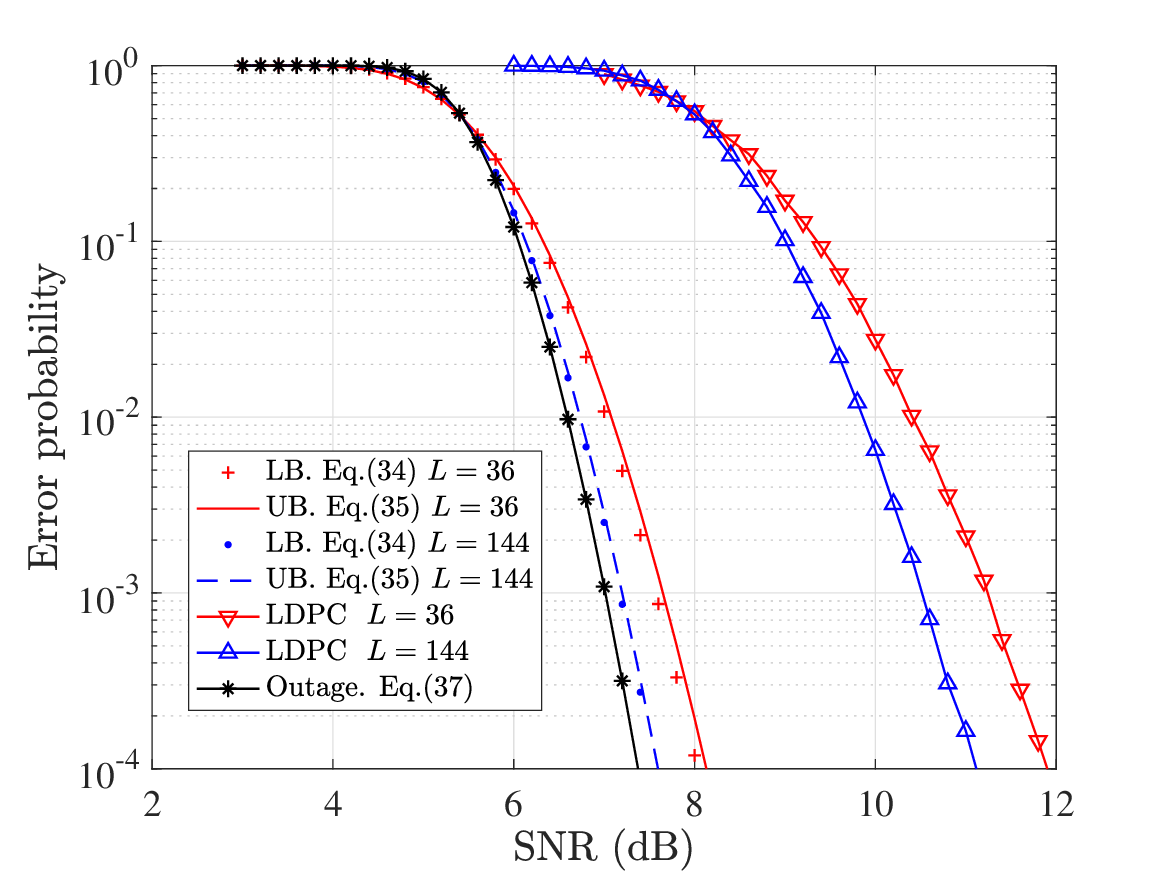}
\captionof{figure}{Bounds and LDPC codes ($n=16$).}
\label{simu_ldpc}
\end{figure}

Fig.~\ref{simu_up} depicts the theoretical upper bound in~(\ref{upper_exp}) and the simulation values are generated by $10^{7}$ Monte-Carlo realizations (noted as Sim.). The simulation settings are: $M=6$, $N=4$, $\sigma^{-2}= 5$ dB, $n=\{12,14,16 \}$, and $R=0.37$ nat/s/Hz. It can be observed that the analytical expressions in Theorem~\ref{the_oaep} are accurate. Meanwhile, the outage probability is too optimistic for small $\beta$. As $\beta$ grows larger, the upper bound approaches outage probability, which validates the analysis in Remark~\ref{rem_outage}. As will be shown in Fig.~\ref{bounds_n}, the gap between the upper and lower bound is very small. Thus, for a better illustration, we omit the lower bound in Fig.~\ref{simu_up}.

Fig.~\ref{bounds_n} compares the derived bounds for Jacobi channels and those for Rayleigh channels with normalized receive SNR $\overline{\mathrm{SNR}}=\overline{\sigma}^{-2}$. The settings are: $M=8$, $N=16$, $n=\{32,64,128 \}$, $L=36$, and $R=1$ nat/s/Hz. It can be observed that as $n$ increases, both the upper and lower bounds for Jacobi channels increase, which agrees with the high SNR analysis in Remark~\ref{high_rem}. Meanwhile, the gap between the upper and lower bounds is small. Furthermore, the bounds for Jacobi channels approach those for Rayleigh channels as the number of available channels increases, which validates the analysis in~Section~\ref{degenerate_sec} and Remark~\ref{rem_rayleigh}.

Next, we compare the performance of specific coding schemes with the derived upper and lower bounds. Here, the $1 \slash 2$  LDPC code is considered for optical fiber communications~\cite{zhang2017low}, where $M=8$, $N=6$, and $n=16$. At the transmitter side, the bit interleaved coded modulation and QPSK modulation are adopted such that the rate is $R=\log(2)$  nat/s/Hz.  Two codelengths of $l \in \{576,2304\}$ bits are considered and the corresponding blocklengths are $L=\frac{l}{2M}\in\{36, 144 \} $. At the receiver side, the maximal likelihood demodulator~\cite{muller2002coding,mckay2005capacity} is adopted such that the log likelihood ratio of the $i$-th bit $s_{i}\in \{0,1\}$ is given by
\begin{equation}
L(s_{i}|\bold{R}^{(L)},\BH)=\log\frac{\sum_{\bold{c} \in \mathcal{C}^{(i)}_{1} } p(\bold{R}^{(L)}|\bold{c},\BH) }{\sum_{\bold{c}\in \mathcal{C}^{(i)}_{0} } p(\bold{R}^{(L)}|\bold{c},\BH) }.
\end{equation}
Here $\mathcal{C}^{(i)}_{1}=\{\bold{c}| c_{i}=1,\bold{c}\in \mathcal{C}   \}$ consists of codewords whose $i$-th digit is $1$ and $\mathcal{C}^{(i)}_{0}$ consists of codewords whose $i$-th digit is $0$. $p(\bold{R}^{(L)}|\bold{c},\BH)$ denotes the conditional probability density function given $\BH$ and $\bold{c}$. The demodulation  results are then passed to the LDPC decoder for soft decision. Under such circumstances, the second-order coding rate is given by $r=\sqrt{ML}(R-\overline{C}(\sigma^2))$. In Fig.~\ref{simu_ldpc}, the error probability of the above LDPC codes is compared with the derived bounds in Theorem~\ref{the_oaep}. It can be observed that outage probability is too optimistic while the derived bounds are closer to the LDPC performance with practical blocklength. Furthermore, outage probability decays very fast when SNR increases, while the slope of the bounds matches that of the error probability for LDPC codes and the performance gap is around $3$ dB. 

\section{Conclusion}
\label{sec_con}
To meet the stringent latency requirement of future communication systems, this paper studied the optimal average error probability of optical fiber multicore/multimode communications  with FBL when the coding rate is a perturbation within $\BO(\frac{1}{\sqrt{ML}})$ of the capacity. We derived the CLT for ID of Jacobi MIMO channels in the asymptotic regime where the number of transmit, receive, and available channels go to infinity at the same pace. The result was then utilized to derive the upper and lower bounds for the optimal average error probability. Compared with outage probability derived in the IBL regime, the bounds can characterize the impact of the blocklengh and degenerate to outage probability when the blocklength approaches infinity. Moreover, the bounds approach those of Rayleigh MIMO channels when the number of channels approaches infinity. Numerical results validated the accuracy of the bounds and showed that outage probability is too optimistic and the derived bounds are closer to the error probability of practical LDPC coding schemes, indicating that the derived bounds provide a better evaluation for optical fiber multicore/multimode communications.

\appendices

\section{Proof of Theorem~\ref{first_the}}
\label{proof_first_the}
The convergence of the $\log\det$ term resorts to the convergence of trace of a specific inverse matrix, $\E[\Tr(\BG(a,b))]$, whose evaluation is based on the following lemma.
\begin{lemma} 
\label{trace_lemma}
Define $\BG(a,b)=\left(a\BX\BX^{H}+b\BY\BY^{H} \right)^{-1}$ with $a>b>0$, where $\BX \in \mathbb{C}^{N \times M_1}$ and $\BY \in \mathbb{C}^{N \times M_2 }$ with $N\le M_1$ and $M_2>0$ are i.i.d. Gaussian random matrices, whose entries have zero mean and $\frac{1}{N}$ variance. When $N \xrightarrow{(c_1,c_2)} \infty  $, with $c_1=\frac{M_1}{N}$ and $c_2=\frac{M_2}{N}$, there holds true that
\begin{equation}
\label{resol_rate}
\frac{1}{N}\E[\Tr(\BG(a,b))]=\delta(a,b)+\BO(\frac{1}{N^2}),
\end{equation}
and
\begin{equation}
\begin{aligned}
\label{second_resol}
&\frac{1}{N}\E [\Tr(\BG(a,b)\BG(c,d))]
\\
&
=\frac{\delta(a,b)}{ \frac{\frac{c M_1}{N}}{(1+c\delta(c,d))(1+a\delta(a,b))}+ \frac{\frac{d M_2}{N}}{(1+d\delta(c,d))(1+b\delta(a,b))}  }+\BO(\frac{1}{N^2}),
\end{aligned}
\end{equation}
where $\delta(a,b)$ is the solution of the following equation
\begin{equation}
\label{basic_eq}
ab(c_1+c_2-1)\delta^2+(ac_1+bc_2-a-b)\delta-1=0.
\end{equation}
\end{lemma}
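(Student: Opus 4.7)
The plan is to use Gaussian integration-by-parts (IBP) together with Poincar\'e--Nash variance bounds, which is the standard random-matrix machinery already invoked in the Rayleigh analyses cited in the excerpt. Throughout, set $u_a:=a\Tr(\BX\BX^H\BG(a,b))$ and $u_b:=b\Tr(\BY\BY^H\BG(a,b))$, so that the deterministic identity $\BM(a,b)\BG(a,b)=\BI_N$ yields the exact relation $u_a+u_b=N$.

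For (\ref{resol_rate}), rewrite $\Tr(\BX\BX^H\BG(a,b))=\sum_{i,j,k} X_{ij}^* G_{ik} X_{kj}$ and apply the complex-Gaussian IBP identity $\E[X_{ij}^* f]=N^{-1}\E[\partial_{X_{ij}} f]$. Using the Wirtinger derivative $\partial_{X_{ij}}\BG(a,b)=-a\BG(a,b)\Be_i \Be_j^H \BX^H \BG(a,b)$ and summing, one obtains the exact identity
\[
\E[u_a](1+a\bar g_N)=aM_1\bar g_N-\tfrac{a}{N}\cov(\Tr\BG(a,b),u_a),\qquad \bar g_N:=\tfrac{1}{N}\E[\Tr\BG(a,b)],
\]
together with the analogous relation for $\E[u_b]$ with $(b,M_2)$ in place of $(a,M_1)$. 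The Poincar\'e--Nash inequality, combined with $\partial_{X_{ij}}\Tr\BG(a,b)=-a(\BX^H\BG(a,b)^2)_{ji}$ and the bound $\sum_{i,j}|\partial_{X_{ij}}\Tr\BG|^2\le a\Tr(\BG^3)=\BO(N)$ (using $a\BX\BX^H\preceq\BM$ to cancel one $\BG$ and $\|\BG\|=\BO(1)$, which holds in the regime $N\le M_1$ since the smallest eigenvalue of $\BM$ is bounded away from zero with overwhelming probability), gives $\Var(\Tr\BG)=\BO(1)$ and hence $\cov(\Tr\BG,u_a)=\BO(1)$. Eliminating $\E[u_a],\E[u_b]$ via the constraint $\E[u_a]+\E[u_b]=N$ and dividing by $N$ produces
\[
\frac{ac_1\bar g_N}{1+a\bar g_N}+\frac{bc_2\bar g_N}{1+b\bar g_N}=1+\BO(N^{-2}),
\]
which upon clearing denominators is (\ref{basic_eq}) perturbed by $\BO(N^{-2})$. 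Since the left-hand side is strictly monotone in $g>0$, the positive root $\delta(a,b)$ is simple and the implicit function theorem gives $\bar g_N=\delta(a,b)+\BO(N^{-2})$.

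For (\ref{second_resol}), apply the same strategy to the two-resolvent trace. Set $\BG_1:=\BG(a,b)$, $\BG_2:=\BG(c,d)$, and $\BM_2:=c\BX\BX^H+d\BY\BY^H$. Left-multiplying $\BG_2\BM_2=\BI$ by $\BG_1$ and tracing yields $\Tr(\BG_1)=c\Tr(\BG_1\BG_2\BX\BX^H)+d\Tr(\BG_1\BG_2\BY\BY^H)$. IBP applied to $\sum_{i,j,k}\E[X_{ij}^*[\BG_1\BG_2]_{ik}X_{kj}]$ now produces three contributions, because $\BG_1\BG_2$ depends on $X_{ij}$ through each factor: a ``free'' term $M_1\E[\Tr(\BG_1\BG_2)]$; a self-coupling $-a\,\E[\Tr\BG_1\cdot \Tr(\BX\BX^H\BG_1\BG_2)]$ from $\partial_{X_{ij}}\BG_1$; and a cross-coupling $-c\,\E[\Tr(\BG_1\BG_2)\cdot \Tr(\BX\BX^H\BG_2)]$ from $\partial_{X_{ij}}\BG_2$. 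Decoupling each product via Poincar\'e--Nash and substituting the single-resolvent asymptotics $N^{-1}\E[\Tr\BG_1]=\delta(a,b)+\BO(N^{-2})$ and $c\E[\Tr(\BX\BX^H\BG_2)]/N=\frac{(cM_1/N)\delta(c,d)}{1+c\delta(c,d)}+\BO(N^{-1})$ from the first part, I would solve the resulting linear equation for $N^{-1}\E[\Tr(\BG_1\BG_2\BX\BX^H)]$ to obtain
\[
\tfrac{1}{N}\E[\Tr(\BG_1\BG_2\BX\BX^H)]=\frac{(M_1/N)\,N^{-1}\E[\Tr(\BG_1\BG_2)]}{(1+c\delta(c,d))(1+a\delta(a,b))}+\BO(N^{-2}),
\]
and symmetrically for the $\BY$ quadratic form with $(M_2,d,b)$ replacing $(M_1,c,a)$. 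Plugging these into the $\Tr(\BG_1)=c(\cdot)+d(\cdot)$ identity, dividing by $N$ and isolating $N^{-1}\E[\Tr(\BG_1\BG_2)]$ yields precisely (\ref{second_resol}).

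The main obstacle is the combinatorial bookkeeping in the two-resolvent step: each IBP produces several cross-covariances of the form $\cov(\Tr(\cdot),\Tr(\cdot))$, each of which must be shown to be $\BO(1)$ via Poincar\'e--Nash so that after division by $N$ the accumulated error remains $\BO(N^{-2})$. This requires uniform control of $\|\BG_i\|$, which holds because $a,b,c,d>0$ and $N\le M_1$ forces the smallest eigenvalue of $\BM$ (resp.\ $\BM_2$) to stay away from zero with overwhelming probability. Secondary technical issues are the Wirtinger calculus for the complex Gaussian IBP (routine but error-prone) and verifying that the positive branch of (\ref{basic_eq}) is the physical root, which follows from strict monotonicity of $g\mapsto \frac{ac_1 g}{1+ag}+\frac{bc_2 g}{1+bg}$ on $(0,\infty)$.
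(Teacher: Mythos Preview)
Your proposal is correct and follows essentially the same route as the paper: Gaussian integration by parts plus Poincar\'e--Nash variance control to derive the self-consistent equation~(\ref{basic_eq}) for the single-resolvent trace, and the same IBP applied to $\Tr(\BG_1\BG_2\BX\BX^H)$ (picking up the extra cross term from differentiating $\BG_2$) for the two-resolvent identity. The only cosmetic difference is that the paper carries out the $\BO(N^{-2})$ error propagation by writing $\delta-\alpha_G=C_{a,b}(\delta-\alpha_G)+\BO(N^{-2})$ and explicitly bounding the contraction factor $C_{a,b}<1$, whereas you invoke strict monotonicity and the implicit function theorem; these are equivalent.
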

\begin{proof} 
For brevity, we denote $\alpha_{G}=\frac{\E[\Tr(\BG(a,b))]}{N}$. The proof includes two steps. In the first step, we show that the deterministic approximation for $\alpha_G$, i.e., $\delta$, is the solution of the quadratic equation in~(\ref{basic_eq}) utilizing the Gaussian tools, i.e., integration by parts formula~\cite[Eq. (40)]{zhang2022asymptotic} and~Poincar{\`e}-Nash inequality~\cite[Eq. (18)]{hachem2008new}. In the second step, we show that the approximation error of $\delta$ is $\BO(N^{-2})$.

\subsection{Step 1: Deterministic approximation for $\alpha_{G}$}
To evaluate $\alpha_{G}$, we first define $\alpha_{GX}=\frac{a}{N}\E[\Tr(\BG(a,b)\BX\BX^{H})]$ and $\alpha_{GY}=\frac{b}{N}\E[\Tr(\BG(a,b)\BY\BY^{H})]$. By the integration by parts formula~\cite[Eq. (17)]{hachem2008new}, we can obtain
\vspace{-0.1cm}
\begin{equation}
\label{trace_gxx}
\begin{aligned}
&\alpha_{GX}
=\!\frac{a}{N} \sum_{i,j} \E [X_{i,j}^{*} [\BG(a,b)\BX]_{i,j}]
\\
&=\!\frac{a c_1 \E[\Tr(\BG(a,b))]}{N}
\!-\!\E \Bigl[\! \frac{a\Tr(\BG(a,b))}{N} \frac{a\Tr(\BG(a,b)\BX\BX^{H})}{N} \! \Bigr]
\\
&=\!\frac{ac_1\E[\Tr(\BG(a,b))]}{N}
\!-\! \frac{a\E[\Tr(\BG(a,b))]}{N} 
\\
&
\times
\frac{\E[a\Tr(\BG(a,b)\BX\BX^{H})]}{N} +\varepsilon_1, 
\end{aligned}
\vspace{-0.3cm}
\end{equation}
where $\varepsilon_1$ can be evaluated by
\begin{equation}
\begin{aligned}
\label{cov_ep1}
\varepsilon_1&=a^2 N^{-2}\cov(\Tr(\BG(a,b)\BX\BX^{H}), \Tr(\BG(a,b))) 
\\
&
\le a^2 N^{-2} \Var^{\frac{1}{2}}(\Tr(\BG(a,b)\BX\BX^{H}))\Var^{\frac{1}{2}}(\Tr(\BG(a,b))).
\end{aligned}
\end{equation}
By the Nash-Poincar{\'e} inequality~\cite[Eq. (18)]{hachem2008new}, we have the following bound
\begin{align*}
&\Var(\Tr(\BG(a,b)\BX\BX^{H}))
\\
&\le  \sum_{i,j}\frac{1}{N} \E \Bigl[\Bigl|\frac{\partial \Tr(\BG(a,b)\BX\BX^{H})}{\partial X_{i,j}}\Bigr|^2 \Bigr]
\\
&
+ \sum_{i,j}\frac{1}{N} \E\Bigl[ \Bigl|\frac{\partial \Tr(\BG(a,b)\BX\BX^{H})}{\partial X_{i,j}^{*}}\Bigr|^2\Bigr] 
\\
&+
\sum_{k,l}  \frac{1}{N} \E \Bigl[\Bigl|\frac{\partial \Tr(\BG(a,b)\BX\BX^{H})}{\partial Y_{k,l}}\Bigr|^2\Bigr] \numberthis
\\
&
+ \sum_{k,l}\frac{1}{N} \E \Bigl[ \Bigl|\frac{\partial \Tr(\BG(a,b)\BX\BX^{H})}{\partial Y_{k,l}^{*}}\Bigr|^2\Bigr]
\\
&:=A_1+A_2+A_3+A_4.
\end{align*}
The term $A_1$ can be evaluated by
\begin{align*}
A_1 &\le \frac{2\E\Tr(\BG^2(a,b)\BX\BX^{H})}{N}
\\
&
+\frac{2a^2\E\Tr((\BG(a,b)\BX\BX^{H})^2\BX\BX^{H})}{N}
\\
& \le \frac{2}{N} \E \| \BG(a,b)  \|  \Tr( \BG(a,b)\BX\BX^{H}) \numberthis
\\
&
+\frac{2\E[\Tr(\BX\BX^{H})]}{N}
  \le \frac{p M}{a^2 N}=\BO(1).
\end{align*}
With the same approach, we can obtain $A_2=\BO(1)$, $A_3=\BO(1)$, and $A_4=\BO(1)$ such that~(\ref{cov_ep1}) is bounded by 
\begin{align}
\cov(\Tr(\BG(a,b)\BX\BX^{H}), \Tr(\BG(a,b)))=\BO(1)
\end{align}
and $\varepsilon_1$ in~(\ref{trace_gxx}) is $\BO(N^{-2})$. Therefore, we can rewrite~(\ref{trace_gxx}) as
\begin{equation}
\alpha_{GX}=a c_1\alpha_G-a\alpha_G\alpha_{GX}+\BO(\frac{1}{N^2}).
\end{equation}
Similarly, we have the following evaluation
\begin{equation}
\label{GY_eva}
\alpha_{GY}
=b c_2 \alpha_G-b\alpha_G\alpha_{GY}+\BO(\frac{1}{N^2}).
\end{equation}
Noticing that $\alpha_{GX}+\alpha_{GY}=\frac{1}{N}\E[\Tr(\BG(a,b)\BG^{-1}(a,b))]=1$, we can obtain that $\alpha_G$ is the solution of the following equation
\begin{equation}
\label{alpha_eq}
\frac{a c_1\alpha_G}{1+a \alpha_G}+\frac{bc_2\alpha_G}{1+b \alpha_G}=1+\BO(\frac{1}{N^2}).
\end{equation}
This indicates that the approximation for $\alpha_G$ should satisfy an equation similar to~(\ref{alpha_eq}). Thus, we define $\delta=\delta(a,b)$ as the positive solution of the following quadratic equation
\begin{equation}
\label{delta_eq_ab}
\frac{a c_1\delta(a,b)}{1+a \delta(a,b)}+\frac{bc_2\delta(a,b)}{1+b \delta(a,b)}=1,
\end{equation}
which is the deterministic approximation for $\alpha_{G}$. Note that ~(\ref{delta_eq_ab}) is equivalent to~(\ref{basic_eq}).

\subsection{Step 2: Convergence rate of $\delta$}
Now, we will show that $\alpha_{G}=\delta+\BO(\frac{1}{N^2})$. By computing the difference between~(\ref{alpha_eq}) and~(\ref{delta_eq_ab}), we have
\begin{align*}
&\delta-\alpha_G
\\
&
=\frac{\frac{a^2c_1}{(1+a\delta)(1+a\alpha_G)}+\frac{b^2c_2}{(1+b\delta)(1+b\alpha_G)}}{(\frac{ac_1}{1+a \delta}+\frac{bc_2}{1+b \delta})(\frac{ac_1}{1+a c_1\alpha_G}+\frac{bc_2}{1+b c_2\alpha_G})}(\delta-\alpha_G)+\BO(\frac{1}{N^2}) \numberthis
\\
&=\frac{\frac{a^2c_1\delta}{(1+a\delta)(1+a\alpha_G)}+\frac{bc_2\delta}{(1+b\delta)(1+b\alpha_G)}}{\frac{ac_1}{1+a \alpha_G}+\frac{bc_2}{1+b \alpha_G}}(\delta-\alpha_G)+\BO(\frac{1}{N^2})
\\
&=C_{a,b}(\delta-\alpha_G)+\BO(\frac{1}{N^2}).
\end{align*}
Thus, $\alpha_{G}=\delta+\BO(\frac{1}{n^2})$ can be obtained by proving $C_{a,b}<1$, which is shown as follows
\begin{equation}
\begin{aligned}
\label{Cab_1}
C_{a,b}
&=\frac{(1-\frac{c_1}{1+a\delta})\frac{a}{1+a\alpha_G}+(1-\frac{c_2}{1+b\delta})\frac{b}{1+b\alpha_G}}{\frac{ac_1}{1+a \alpha_G}+\frac{bc_2}{1+b \alpha_G}}
\\
&=1-\frac{\frac{c_1}{1+a\delta}\frac{a}{1+a\alpha_G}+\frac{c_2}{1+b\delta}\frac{b}{1+b\alpha_G}}{\frac{a}{1+a \alpha_G}+\frac{b}{1+b \alpha_G}}
=1-D_{a,b}.
\end{aligned}
\end{equation}
Then we only need to show that $D_{a,b}$ is bounded away from zero, which is achieved by analyzing the bounds for $\delta$ and $\alpha_G$. By~(\ref{delta_eq_ab}) and $a>b>0$, we have
\begin{equation}
\frac{(c_1+c_2)b \delta}{1+b\delta}<\frac{a c_1\delta}{1+a \delta}+\frac{bc_2\delta}{1+b \delta}=1,
\end{equation}
such that 
\begin{equation}
\label{delta_bnd}
\delta<\frac{1}{(c_1+c_2)b}.
\end{equation}
Moreover, we have 
\begin{equation}
\label{bnd_alphaG}
\begin{aligned}
&[a(1+\sqrt{c_1})^2+b(1+\sqrt{c_2})^2]^{-1}
\\
<&\alpha_G<\frac{\E[\Tr(\BX\BX^{H})^{-1}]}{Na}\le a^{-1}(1-\sqrt{c_1})^{-2},
\end{aligned}
\end{equation}
which follows from the fact that the eigenvalues of $\BX\BX^{H}$ locate in the interval $[1-c_1^{-\frac{1}{2}}, 1+c_1^{-\frac{1}{2}}]$ with~\textbf{Assumption A}. By~(\ref{Cab_1}),~(\ref{delta_bnd}) and the bounds for $\alpha_G$ in~(\ref{bnd_alphaG}), we can conclude that there exists a constant $K$ independent of $N$, $M_1$, and $M_2$ such that $D_{a,b}>K>0$. Therefore, we have $C_{a,b}<1$ such that $\alpha_{G}=\delta+\BO(\frac{1}{N^2})$, which concludes~(\ref{resol_rate}).

Now we turn to prove~(\ref{second_resol}). By the integration by parts formula~\cite[Eq. (17)]{hachem2008new}, we have
\begin{equation}
\begin{aligned}
\label{GG_1}
&\frac{1}{N}\E[\Tr(\BG(a,b)\BX\BX^{H}\BG(c,d))]
\\
&=\frac{1}{N}\sum_{i,j} \E [X^{*}_{i,j} [\BG(c,d)\BG(a,b)\BX]_{i,j}] 
\\
&=\frac{1}{N}\E\Bigl[\frac{\partial [\BG(c,d)\BG(a,b)\BX]_{i,j}}{\partial X_{i,j}}\Bigr]
\\
&
=\frac{\frac{M_1}{N^2}\E[\Tr(\BG(a,b)\BG(c,d))]}{(1+c\delta(c,d))(1+a\delta(a,b))}
+\BO(N^{-2}),
\end{aligned}
\end{equation}
and
\begin{equation}
\begin{aligned}
\label{GG_2}
&\frac{1}{N}\E[\Tr(\BG(a,b)\BY\BY^{H}\BG(c,d))]
\\
&
=\frac{\frac{M_2}{N^2}\E[\Tr(\BG(a,b)\BG(c,d))]}{(1+d\delta(c,d))(1+ b\delta(a,b))}
+\BO(N^{-2}).
\end{aligned}
\end{equation}
By $c\times(\ref{GG_1})+ d\times(\ref{GG_2})$, we can conclude~(\ref{second_resol}).
\end{proof}
Now we begin to prove Theorem~\ref{first_the}. According to~\cite[Eq. (4)]{zhang2021bias}, $C(\sigma^2)$ can be represented as the following integral,
\begin{equation}
\begin{aligned}
C(\sigma^2)&=\int_{\sigma^2}^{\infty} \frac{N}{Mz}-\frac{1}{M}\Tr(z\BI_N+\BH\BH^{H})^{-1}\mathrm{d}z.
\end{aligned}
\end{equation}
When $N\le M$, by (\ref{resol_rate}) in Lemma~\ref{trace_lemma}, we can obtain
\begin{align*}
&\E[C(\rho)]=\int_{\sigma^2}^{\infty}\frac{N}{Mz} - \frac{1}{M}\E[\Tr(z\BI_N+\BH\BH^{H})^{-1}]\mathrm{d}z
\\
&=\int_{\sigma^2}^{\infty} \frac{N}{Mz}-\frac{1}{M}\E[\Tr(\BG(z)(\BX\BX^{H}+\BY\BY^{H}))] \mathrm{d}z \numberthis \label{EC_integ}
\\
&=\int_{\sigma^2}^{\infty} \frac{N}{Mz}-\frac{N}{M}\underbrace{\left(\frac{M\delta_z}{N(1+(1+z)\delta_{z})}+\frac{(n-M)\delta_z}{N(1+z\delta_{z})} \right)}_{K(z)}\mathrm{d}z,
\end{align*}
where $\BG(z)=((1+z)\BX\BX^{H}+z\BY\BY^{H})^{-1}$, $\delta_z$ is given in~(\ref{deltaz_exp}) at the top of next page, and $N_0=n-M$.
\begin{figure*}
\vspace{-0.6cm}
\begin{equation}
\label{deltaz_exp}
\begin{aligned}
\delta_z&=\frac{-(( z+(1+z)   -\frac{N_0z}{N} -(1+z)\frac{M}{N}))-\sqrt{( z+(1+z)   -\frac{N_0z}{N} -(1+z)\frac{M}{N})^2-4(1-\frac{n}{N})z(1+z)}}{2(1-\frac{n}{N})z(1+z)}
\\
&=\frac{-[ \frac{N-M}{n}+(\frac{2N}{n}-1)z ]-\sqrt{(z+\lambda_{-})(z+\lambda_{+})}}{2(\frac{N}{n}-1)z(1+z)}
=\frac{\sqrt{(z+\lambda_{-})(z+\lambda_{+})}+[ \frac{N-M}{n}+(\frac{2N}{n}-1)z ]}{2(1-\frac{N}{n})z(1+z)}.
\end{aligned}
\end{equation}
\vspace{-0.6cm}
\hrulefill
\end{figure*}
It is worth noticing that the interval $[\lambda_{-},\lambda_{+}]$ is exactly the support of the limiting eigenvalue distribution of Jacobi ensemble, which agrees with~\cite[Eq. (21)]{simon2006crossover}. Define $F(z)=\frac{1}{N}[M\log(1+(1+z)\delta_z)+(n-M)\log(1+z\delta_z)-N\log(\delta_z)]$ and notice that 
\begin{align*}
F'(z)&=K(z)+\frac{M(1+z)\delta_z'}{N(1+(1+z)\delta_{z})}+\frac{(n-M)z\delta_z}{N(1+z\delta_{z})}-\frac{\delta_z'}{\delta_z}
\\
&=K(z), \numberthis
\end{align*}
where $K(z)$ is given in~(\ref{EC_integ}). Since $\lim_{z\rightarrow \infty}z\delta=\frac{y_1}{1-y_1}$ and $F(\infty)=0$, we can conclude
\begin{equation}
\begin{aligned}
\E[C(\rho)]= \frac{N}{M}\log(\sigma^{-2})+ \frac{N}{M}F(\sigma^2)=\overline{C}(\sigma^2).
\end{aligned}
\end{equation}
Furthermore, we can obtain $\int_{s=0}^{b}  \frac{\delta}{1+sy_2\delta} \mathrm{d} s=F(b)-F(0)$. It is easy to verify that $F(0)=\frac{(1-y_1)\log(1-y_1)}{y_1}$ so that we can conclude~(\ref{C_exp_1}) in Theorem~\ref{first_the}. The almost sure convergence in~(\ref{as_con}) can be obtained by the convergence of Beta matrices~\cite[Theorem 1.1]{bai2015convergence} since Jacobi matrices belong
to Beta matrices.

\vspace{-0.2cm}

\section{Proof of Theorem~\ref{clt_the}}
\label{prof_the2}
The asymptotic Gaussianity of ID is proved by investigating the convergence of its characteristic function. For ease of presentation, we omit the superscript of $\FS^{(L)}$ and $\BW^{(L)}$ and the subscript of $\BC_L$. The characteristic function of ID $\Psi^{\BW,\BH}_{L}(t)$ is given by
\begin{equation}
\Psi^{\BW,\BH}_{L}(t)=\E[e^{\jmath t \sqrt{LM}I_{N,M,n,L}^{\BW,\BH}(\sigma^2)}]:=\E[\Phi^{\BW,\BH}_{L}(t)].
\end{equation}
Define the notation $N_0=n-M$, $N_1=n-N$ and 
\begin{equation}
\begin{aligned}
\label{BA_proof}
\BQ(z) \!=\!\left(z\BI_N+\BH\BH^{H} \right)^{-1},\widetilde{\BQ}(z) \!=\!\left(z\BI_M+\BH^{H}\BH \right)^{-1},
\end{aligned}
\end{equation}
$\BQ=\BQ(\sigma^2)$, $\widetilde{\BQ}=\widetilde{\BQ}(\sigma^2)$, and $\BG(a)=\left((1+a)\BX\BX^{H}+a\BY\BY^{H} \right)^{-1}$, where $\BX\in \mathbb{C}^{ m \times p}$ and $\BY\in \mathbb{C}^{m \times (n-p)}$ are i.i.d. Gaussian random matrices with zero mean and $\frac{1}{m}$ variance with $m=\min\{N,M \}$ and $p=\max\{M,N \}$. We further define functions
\begin{equation}
\label{M_fun_def}
\mathcal{E}(x)=\min(1,x^2), ~\mathcal{E}(A,x)=\min(A^{-1},x^2).
\end{equation}
We can conclude that for $x \ge 1$, $t>0$, and $A>0$, we have 
\begin{equation}
\begin{aligned}
& e^{-\frac{A t^2}{2}}\int_{0}^{t} y^{\alpha} e^{\frac{A y^2}{2}}\mathrm{d} y   \le t^{\alpha-1} e^{-\frac{A t^2}{2}}\int_{0}^{t}  y e^{\frac{A y^2}{2}} \mathrm{d} y
\\
&
=t^{\alpha-1} A^{-1}(1-e^{-\frac{A t^2}{2}})=\BO( t^{\alpha-1}\mathcal{E}(A,t) ),
\end{aligned}
\end{equation}
where $\mathcal{E}(A,t)$ is given in~(\ref{M_fun_def}). 

To show that the asymptotic distribution of $\sqrt{ML}I_{N,M,n,L}^{\BW,\BH}(\sigma^2)$ converges to the Gaussian distribution, we first show that its characteristic function converges to that of Gaussian distribution, i.e.,
\begin{equation}
\label{cha_con_eq1}
\begin{aligned}
\Psi^{\BW,\BH}_L(t)
=e^{\jmath t \sqrt{ML}\times \overline{C}(\sigma^2) -\frac{t^2 \Xi}{2}  }+J(t,\BC),
\end{aligned}
\end{equation}
where $J(t,\BC)  \xrightarrow[]{{N  \xrightarrow[]{y_1,y_2, \beta}\infty}} 0$ and  $\Xi$ is the asymptotic variance. This approach has been used in the second-order analysis of MIMO channels~\cite{hachem2008new,zhang2022asymptotic,zhang2022secrecy}. Two random matrices, $\BH$ and $\BW$ in the characteristic function $\Psi^{\BW,\BH}_{L}(t)$, will be handled iteratively. The detailed proof is given in the following.

\subsection{Step:1 Compute the expectation with respect to $\BW$}

In this step, we will show that $\frac{\partial \Psi_L^{ \BW,\BH}(t)}{\partial t}$ can be approximated by  $\E [(\jmath \chi^{\BH}_{L}-t\psi^{\BH}_{L} )\Phi^{\BH}_{L}(t)]$ by removing the dependence on $\BW$. As shown in Lemma~\ref{trace_lemma}, the approximation error for $\E[\Tr(\BG(z))]$ over Jacobi channels has the same order as that of Rayleigh~\cite{hachem2008new} and Rayleigh-product MIMO channels~\cite{zhang2022asymptotic} (all are $\BO(N^{-1})$). Thus, the derivative of the characteristic function $ \Psi^{ \BW,\BH}(u)$ can be approximated by the same approach as~\cite[Appendix D.D, Eq. (221) to (240)]{hoydis2015second} and~\cite[Eq. (88)]{zhang2022second} with
\begin{equation}
\label{chara_lemma}
\begin{aligned}
\frac{\partial  \E[\Phi^{\BW,\BH}_{L}(t)]}{\partial t} \!=\!\E [(\jmath \chi^{\BH}_{L}-t\psi^{\BH}_{L} )\Phi^{\BH}_{L}(t)]
+\BO(d(t,1)),
\end{aligned}
\end{equation}
where $d(t,A)=\frac{t^2}{M}+\frac{t^3\mathcal{E}(A,t)}{M^2}+\frac{t^4 \mathcal{E}(A,t)}{M^3}$,
\begin{equation}
\Phi_{L}^{\BH}=e^{ \Upsilon^{\BH}_{L}},
\end{equation}
\begin{equation}
\Upsilon^{\BH}_{L}=\jmath t \chi^{\BH}_L-\frac{t^2 \psi^{\BH}_L}{2}
+\frac{\jmath t^3 \gamma_L^{\BH}}{3},
\end{equation}
\vspace{-0.5cm}
\begin{align*}
\chi^{\BH}_L&=\sqrt{\frac{L}{M}}\log\det\left(\bold{I}_N+\BH\BH^{H}\right)-\frac{L}{\sqrt{ML}}\Tr(\BQ\BH\BC\BH^{H}),
\\
\psi^{\BH}_L&=\frac{L}{ML}\Tr((\BQ\BH\BH^{H})^2)+\frac{2\sigma^2L}{ML}\Tr\Bigl(\BQ^2\BH\frac{\FS\FS^{H}}{L}\BH^{H}\Bigr),
\\
\gamma_L^{\BH}&=\frac{L}{\sqrt{L^3M^3}}\Tr((\BQ\BH\BH^{H})^3)
\\
&
+\frac{3\sigma^2L}{\sqrt{L^3M^3}}\Tr\Bigl(\BQ^2\BH\BH^{H}\BQ\BH\frac{\FS\FS^{H}}{L}\BH^{H}\Bigr),\numberthis
\end{align*}
and $\mathcal{E}(x)$ is defined in~(\ref{M_fun_def}). By taking integral over both sides of~(\ref{chara_lemma}), $ \Psi^{\BW,\BH}_{L}(t)$ can be evaluated by
\begin{equation}
 \Psi^{\BW,\BH}_{L}(t)=\E[ \Phi_{L}^{\BH}(t)] +\BO\Bigl(\frac{t  \mathcal{E}(t)}{M^2}\Bigr).
\end{equation}
Thus, by~(\ref{chara_lemma}), the evaluation for the characteristic function resorts to that for $\E [(\jmath \chi^{\BH}_{L}-t\psi^{\BH}_{L} )\Phi^{\BH}_{L}(t)]$ by taking the expectation with respect to $\BH$. The aim of collecting $t$-terms is to analyze the approximation error of Gaussian approximation.


\subsection{Step 2: Evaluation of $\E  [(\jmath \chi^{\BH}_L-t\psi^{\BH}_L ) \Phi_{L}^{\BH}(t)]$  }
\label{appro_charaf}
In this step, we further remove the dependence of $\E  [(\jmath \chi^{\BH}_L-t\psi^{\BH}_L ) \Phi_{L}^{\BH}(t)]$ on $\BH$ by utilizing the integration by parts formula~\cite[Eq. (17)]{hachem2008new} and the variance control. To this end, we first decompose $\E [ (\jmath\chi_L^{\BH}-t\psi^{\BH}_{L} ) \Phi_{L}^{\BH}]$ into $P_H$ and $P_A$ as
 \begin{equation}
\label{exp_U1U2}
\begin{aligned}
& \E [(\jmath \chi^{\BH}_L-t\psi^{\BH}_L )  \Phi_{L}^{\BH}(t)] =\E[\jmath \chi^{\BH}_L\Phi_{L}^{\BH}(t)]
\\
&
-t\E [\psi^{\BH}_L\Phi_{L}^{\BH}(t)]
=P_H+P_A.
\end{aligned}
\end{equation}
Now we turn to evaluate $P_H$ and $P_A$. 
\subsubsection{Evaluation of $P_H$}
With~\cite[Eq. (4)]{zhang2021bias}, $P_H$ can be represented as follows
\begin{align*}
 P_H & =\jmath\sqrt{\frac{L}{M}}\int_{\sigma^2}^{\infty} \frac{N\E[\Phi_{L}^{\BH}(t)]}{z}
-\E[\Tr(\BQ(z))\Phi_{L}^{\BH}(t)]\mathrm{d}z
\\
&
-\frac{\jmath L}{\sqrt{ML}}\E [\Tr(\BQ\BH\BC\BH^{H})\Phi_{L}^{\BH}(t)]
=P_{H,1}+P_{H,2}. \numberthis \label{PH_step1}
\end{align*}
The proof can be divided into two cases: $N\le M$ and $N>M$. The evaluation of $P_{H,1}$ is same for both cases but those of $P_{H,2}$ are different. 

When $N\le M$, the channel can be equivalently represented by $\BH=(\BX\BX^{H}+\BY\BY^{H})^{-\frac{1}{2}}\BX$. By the integration by parts formula, we have the evaluation for $\E[\Tr(\BQ(z))\Phi_{L}^{\BH}(t)]$ as
\begin{align*}
& \E[\Tr(\BQ(z))\Phi_{L}^{\BH}(t)]=\E[\Tr(\BG(z)(\BX\BX^{H}+\BY\BY^{H}))\Phi_{L}^{\BH}(t)] 
\\
&
=Q_1+Q_2+\BO\Bigl(f(t,\BC)+\frac{t\mathcal{P}(z^{-1})}{z^2N}\Bigr),\numberthis \label{Q_phi_final}
\end{align*}
where  $\BG(z)=\left((1+z)\BX\BX^{H}+z\BY\BY^{H} \right)^{-1}$,
\begin{align*}
Q_1 &= (M\overline{C}'(z)+N z^{-1} )\E[\Phi^{\BH}_{L}(t)],\numberthis
\\
Q_2 &=\jmath  t   \sqrt{\frac{L}{M}} \Bigl[-\log\Bigl(\frac{\frac{ M(1+\sigma^2)}{N}}{(1+(1+\sigma^2)\delta_{\sigma^2})(1+(1+z)\delta_z)}
\\
&
+ \frac{\frac{ N_0 \sigma^2}{N}}{(1+\sigma^2\delta_{\sigma^2})(1+z\delta_z)}\Bigr)
+\log\Bigl(\frac{\frac{ M}{N}}{(1+\delta_{0})(1+(1+z)\delta_z)}
\\
&
+ \frac{\frac{ N_0 }{N}}{(1+\delta_{0})(1+z\delta_z)}\Bigr)\Bigr]':=\jmath  t   \sqrt{\frac{L}{M}} V'(z),\numberthis
\end{align*}
$\delta_z=\delta(1+z,z)$, and $\mathcal{P}(\cdot)$ is a polynomial with positive coefficients. Thus, $P_{H,1}$ can be obtained by taking integral over $z$ as
\begin{equation}
\label{PH1_eva}
\begin{aligned}
P_{H,1}&= 
\Bigl[-\jmath \sqrt{ML} \overline{C}(z)  |_{\sigma^2}^{\infty}+\frac{Lt}{M} V(z)|_{\sigma^2}^{\infty})\Bigr] \E[\Phi^{\BH}_L(t)]
\\
&
+\BO\Bigl(\frac{\mathcal{P}(\sigma^{-2})}{\sigma^2}\Bigl(\frac{t}{M}+f(t,\BC)\Bigr)\Bigr)
\\
&=\Bigl(\jmath \sqrt{ML} \overline{C}(\sigma^2)-\frac{tL}{M}V_1\Bigr)\E[\Phi^{\BH}_L(t)]
\\
&
+\BO\Bigl(\frac{\mathcal{P}(\sigma^{-2})}{\sigma^2}\Bigl(\frac{t}{M}+f(t,\BC)\Bigr)\Bigr),
\end{aligned}
\vspace{-0.1cm}
\end{equation}
where $V_1$ is given in~(\ref{var_exp1}) and
\begin{equation}
\begin{aligned}
f(t,\BC)&=\frac{1}{N}+\frac{t\sqrt{\Tr(\BC^2)}}{N^{\frac{3}{2}}}+\frac{t^2\Bigl(N+\sqrt{\frac{\Tr(\BC^2)}{N}} \Bigr)}{N^2}
\\
&+\frac{t^3 \Bigl(N+\sqrt{\frac{\Tr(\BC^2)}{N}} \Bigr)}{N^3}.
\end{aligned}
\end{equation}
When $N\le M$, by the integration by parts formula and variance control, $P_{H,2}$ can be evaluated by
\begin{align*}
&P_{H,2}\!=\!-\frac{  t L\E[\Tr(\BG^2(\sigma^2))]}{MN (1+(1+\sigma^2)\delta_{\sigma^2})^4}\frac{\Tr(\BC^2)}{N} \E[\Phi^{\BH}_L(t)]\!+\!\BO(g(t,\BC))
\\
&\!=\!-\frac{ t L}{M (1+(1+\sigma^2)\delta_{\sigma^2})^4} \frac{M\delta_{\sigma^2}\frac{\Tr(\BC^2)}{M} \E[\Phi^{\BH}_L(t)]}{N \Bigl(\frac{\frac{(1+\sigma^2) M}{N}}{(1+(1+\sigma^2)\delta_{\sigma^2})^2}+ \frac{\frac{\sigma^2  N_0}{N}}{(1+\sigma^2\delta_{\sigma^2})^2} \Bigr) }
\\
&
+\BO(g(t,\BC))  \numberthis \label{PH2_eva}
\\
&
\!=\! -\frac{ t \beta V_3\Tr(\BC^2)}{M} \E[\Phi^{\BH}_L(t)]+\BO(g(t,\BC)),
\end{align*}
where
\vspace{-0.2cm}
\begin{equation}
\begin{aligned}
g(t,\BC)&=\frac{\sqrt{\Tr(\BC^2)}}{N^{\frac{3}{2}}}
+  \frac{t\Tr(\BC^2)}{N^{\frac{3}{2}}} 
\\
&+ \frac{t^2(1+ \frac{\Tr(\BC^2)}{N} )}{N}+\frac{t^3(1+ \frac{\Tr(\BC^2)}{N} )}{N^2},
\end{aligned}
\end{equation}
and the last step follows from the evaluation for $\E[\Tr(\BG(\sigma^2))]$ in~(\ref{second_resol}). When $M<N$, we can obtain that $\BH^{H}\BH$ has the same eigenvalue distribution as $\BX\BX^{H}\left(\BX\BX^{H}+\BY\BY^{H}\right)^{-1}$, where $\BX\in\mathbb{C}^{M\times N}$ and $\BY\in\mathbb{C}^{M\times (n-N)}$ are i.i.d. complex Gaussian random matrices. In this case, $\BG(\sigma^2)\in \mathbb{C}^{M\times M}$ and $P_{H,2}$ can be evaluated by
\begin{align*}
&P_{H,2}=-  \beta \E [\Tr(\widetilde{\BQ}\BH^{H}\BH\BC)\Phi_{L}^{\BH}(t)]
\\
&
=-  t\beta \E[\Tr(\BG(\sigma^2)\BX\BX^{H}\BC)\Phi^{\BH}_{L}(t)]
\\
&
=-  t\beta\frac{N N_1\sigma^2\delta_{\sigma^2}^2}{M^2(1+(1+\sigma^2)\delta_{\sigma^2})^2(1+\sigma^2\delta_{\sigma^2})} \numberthis \label{QXXC_eva}
\\
&
\times
\Bigl[\delta-\frac{N}{M(1+\sigma^2\delta_{\sigma^2})(1+(1+\sigma^2)\delta_{\sigma^2})^2}\frac{\E[\Tr(\BG^2(\sigma))]}{M} \Bigr]
\\
&\times 
 \E[\Phi^{\BH}_{L}(t)]+g(t,\BC)
\\
&
=- \frac{  t\beta V_3 \Tr(\BC^2)}{M} \E[\Phi^{\BH}_{L}(t)]+g(t,\BC).
\end{align*}
\subsubsection{Evaluation of $P_{A}$}
When $N \le M$, by the resolvent identity $\BI_N=\sigma^2\BQ+\BQ\BH\BH^{H}$, we can rewrite $\E[\psi^{\BH}_L\Phi_{L}^{\BH}(t)]$ as
\begin{align*}
& \E[\psi^{\BH}_L\Phi_{L}^{\BH}(t)]=\frac{N}{M}\E\Bigl[(\frac{1}{N}\Tr(\BI_N-\sigma^4\BQ^2 -2\BQ^2\BH\BH^{H}  ) 
\\
&
+\frac{2\sigma^2}{N}\Tr(\BQ^2\BH\frac{\FS\FS^{H}}{L}\BH^{H}))\Phi_{L}^{\BH}(t)\Bigr]+\BO\Bigl(\frac{1}{M}\Bigr)
\\
&=\frac{N}{M}\E\Bigl[\!1 \!\!-\!\frac{\sigma^4\Tr(\BQ^2)}{N}\! \!+\!\!\frac{2\sigma^2}{N}\Tr(\BQ^2\BH\BC\BH^{H}))\Phi_{L}^{\BH}(t)\! \Bigr]\!\!+\!\BO\Bigl(\frac{1}{M}\Bigr)
\\
&\overset{(a)}{=} \frac{N}{M} \Bigl(1-\frac{\sigma^4\E[\Tr(\BQ^2)]}{N}\Bigr)\E[\Phi_{L}^{\BH}(t)]+\BO\Bigl(\frac{\Tr(\BC^2)}{N^{\frac{3}{2}}}\Bigr)
\\
&=\frac{N}{M} \Bigl(1+\frac{n-N}{N}\sigma^4\delta'_{\sigma^2}  \Bigr)\E[\Phi_{L}^{\BH}(t)]+\BO\Bigl(\frac{\Tr(\BC^2)}{N^{\frac{3}{2}}}\Bigr)
\\
&=V_2\E[\Phi_{L}^{\BH}(t)]+\BO\Bigl(\frac{\Tr(\BC^2)}{N^{\frac{3}{2}}}\Bigr),\numberthis \label{Ppsi_eva}
\end{align*}
where the approximation error in step $(a)$ follows from the variance control similar to~\cite[Eq. (138)]{zhang2022second}. The case for $M<N$ can be handled similarly by evaluating $\E[\Tr(\widetilde{\BQ})]$. By~(\ref{chara_lemma}),~(\ref{PH1_eva}),~(\ref{PH2_eva}), and~(\ref{Ppsi_eva}), we can obtain the following differential equation
\begin{equation}
\label{chara_lemma2}
\frac{\partial  \E[\Phi^{\BW,\BH}_{L}\! (t)]}{\partial t} \!\!=\!(\jmath \sqrt{ML}\overline{C}(\sigma^2)\!-\! t \Xi )\E [\Phi^{\BW,\BH}_{L}(t)]+o(1).
\end{equation}

\subsection{Step 3: Convergence of $\E \bigl[ e^{\jmath\frac{\sqrt{ML}}{\sqrt{\Xi}}(I_{N,M,n,L}^{\BW,\BH}(\sigma^2)-\overline{C}(\sigma^2))}\bigr]$}
By solving the differential equation in~(\ref{chara_lemma2}), we can obtain the evaluation for $\Psi^{\BW,\BH}(t)$ as
\begin{align*}
&\Psi^{\BW,\BH}(t)=e^{\jmath t \sqrt{ML}\times \overline{C}(\sigma^2) -\frac{t^2 \Xi}{2}  }
\\
&
\times
(1+\int_{0}^{t}e^{-\jmath s \sqrt{ML}\times \overline{C}(\sigma^2) +\frac{s^2 \Xi}{2}  }\overline{\varepsilon}(s,\BC)\mathrm{d}s)\numberthis \label{pphi_eps}
\\
&
=e^{\jmath t \sqrt{ML}\times \overline{C}(\sigma^2) -\frac{t^2 \Xi}{2}  }+J(t,\BC),
\end{align*}
where $\Xi=\beta V_1+V_2+\frac{\Tr(\BC^2)}{M}\beta V_3$, $\overline{\varepsilon}(t,\BC)=\BO\bigl(\frac{t\mathcal{E}(\Xi,t)}{N}+\frac{t\Xi\mathcal{E}(\Xi,t)}{N^2}+d(t,\Xi)+\frac{t}{M}+f(t,\BC)+g(t,\BC)+\frac{\Tr(\BC^2)}{N^{\frac{3}{2}}}\bigr)$, and the error term $J(t,\BC)$ is given by
\begin{align*}
& J(t,\BC)=\BO\Bigl(\frac{\mathcal{E}(\Xi,t)}{N}+\frac{ \Xi\mathcal{E}(\Xi,t)}{N^2}
\\
&+ \frac{t\mathcal{E}(\Xi,t)}{N}+\frac{t^2 \mathcal{E}(\Xi,t)}{N^2}+\frac{t^3\mathcal{E}(\Xi,t)}{N^3}
\\
&+
\frac{t}{N} +\frac{t\sqrt{\Tr(\BC^2)}}{N^\frac{3}{2}}
+\frac{t \mathcal{E}(\Xi,t) (N+\sqrt{\frac{\Tr(\BC^2)}{M}})}{N^{2}}   \numberthis \label{final_error_term}
\\
&
+\frac{t^2 \mathcal{E}(\Xi,t)(N+\sqrt{\frac{\Tr(\BC^2)}{M}})}{N^{3}}+
\frac{\mathcal{E}(\Xi,t)}{ N}+\frac{t\sqrt{\Tr(\BC^2)}}{N^{\frac{3}{2}}} 
\\
&
+\Bigl[\frac{\Tr(\BC^2)}{N^{\frac{3}{2}}} + \frac{t(1+\frac{\Tr(\BC^2)}{M})}{N}+\frac{t^2(1+\frac{\Tr(\BC^2)}{M})}{N^2}\Bigr]\mathcal{E}(\Xi,t)\Bigr).
\end{align*}
Therefore, the characteristic function of the normalized ID can be written as
\begin{align*}
&\Psi^{\BW,\BH}_{norm}(t) =\E [e^{{\jmath\frac{\sqrt{ML}}{\sqrt{\Xi}}(I_{N,M,n,L}^{\BW,\BH}(\sigma^2)-\overline{C}(\sigma^2))}}]\numberthis \label{conver_cha}

\\
&
=\Psi^{\BW,\BH}\Bigl(\frac{t}{\sqrt{\Xi}}\Bigl)e^{-\jmath t\frac{\overline{C}(\sigma^2)}{\sqrt{\Xi}}}
+J\Bigl(\frac{t}{\sqrt{\Xi}},\BC\Bigr)
\overset{(a)}{=}e^{-\frac{t^2}{2}}\!\!+\!\BO\Bigl(\frac{1}{\sqrt{N}}\Bigr),
\end{align*}
where step $(a)$ in~(\ref{conver_cha}) follows from
\begin{equation}
\begin{aligned}
& \BO\Bigl(\mathcal{E}(\frac{t}{\sqrt{\Xi}},\BC)\Bigr) =\BO\Bigl(\frac{\mathcal{E}(t)\frac{\Tr(\BC^2)}{M}}{\sqrt{M}\Xi}\Bigr)
\\
&
=\BO\Biggl(\frac{\mathcal{E}(t)\frac{\Tr(\BC^2)}{M}}{\sqrt{M}(\frac{\tau\Tr(\BC^2)}{M}+K)}\Biggr)=\BO\Bigl(\frac{\mathcal{E}(t)}{\sqrt{N}}\Bigr).
\end{aligned}
\end{equation} 
Here $\tau=\BO(1)$ is the coefficient of $\frac{\Tr(\BC^2)}{M}$ and $K$ is a constant independent of $M$, $N$, $L$, and $n$. By~(\ref{conver_cha}) and L{\'e}vy’s continuity theorem~\cite{billingsley2017probability}, we can obtain the following convergence
\begin{equation}
\begin{aligned}
\sqrt{\frac{ML}{\Xi}}(I_{N,M,n,L}^{\BW,\BH}(\sigma^2)\!-\!\overline{C}(\sigma^2)) \!\!\xrightarrow[{N  \xrightarrow[]{(y_1,y_2, \beta)}\infty}]{\mathcal{D}}\!\! \mathcal{N}(0,1).
\end{aligned}
\end{equation}

\subsection{Approximation accuracy of~(\ref{prob_con_rate})}

The convergence rate for the Gaussian approximation of ID can be obtained by the Esseen inequality~\cite[p538]{feller1991introduction}, which says that
 there exists $C>0$ for any $T>0$ such that 
\begin{align*}
&\sup\limits_{x\in \mathbb{R}}\Biggl|\Prob(\sqrt{\frac{{ML}}{\Xi}}(I_{N,M,n,L}^{\BW,\BH}(\sigma^2)-\overline{C}(\sigma^2)) \le x  )\Biggr|
\\
&
\le C \int_{0}^{T} t^{-1} | \Psi^{\BW,\BH}_{norm}(t) - e^{-\frac{t^2}{2}}| \mathrm{d} t+ \frac{C}{T} 
\\
& \le K \Bigl(\int_{0}^{T} t^{-1}  J(\frac{t}{\sqrt{\Xi}},\BC) +\frac{1}{T}\Bigl). \numberthis  \label{berry_ineq}
\end{align*}

Notice that the dominating term in~(\ref{final_error_term}) is $\BO(\frac{\mathcal{E}(t)+t}{\sqrt{M}})$ and for $T>1$,
\begin{equation}
\begin{aligned}
& \int_{0}^{T}t^{-1} (\mathcal{E}(1,t)+t) \mathrm{d} t
=\int_{0}^{1}(1+t) \mathrm{d} t
\\
&
+\int_{1}^{T}t^{-1} (1+t) \mathrm{d} t = \BO(\log(T)+T)=\BO(T).
\end{aligned}
\end{equation}
By taking $T=L^{\frac{1}{4}}$ in~(\ref{berry_ineq}), we can obtain
 \begin{equation}
\sup\limits_{x\in \mathbb{R}}\Bigl|\Prob\Bigl(\sqrt{\frac{{ML}}{\Xi}}(I_{N,M,n,L}^{\BW,\BH}(\sigma^2)-\overline{C}(\sigma^2)) \le x  \Bigr)\Bigr|=\BO(L^{-\frac{1}{4}}),
\end{equation}
which concludes the proof of Theorem~\ref{clt_the}.

\ifCLASSOPTIONcaptionsoff
  \newpage
\fi



\bibliographystyle{IEEEtran}
\bibliography{IEEEabrv,ref}
\end{document}